\renewcommand*{\backref}[1]{}
\renewcommand*{\backrefalt}[4]{%
	\ifcase #1 Not cited.%
	\or        Cited on page~#2.%
	\else      Cited on pages~#2.%
	\fi}
\lstdefinestyle{mystyle}{
	basicstyle=\ttfamily,
	columns=flexible,
	showstringspaces=false
}
\theoremstyle{plain}
\newtheorem{theorem}{Theorem}[section]
\newtheorem{lemma}[theorem]{Lemma}
\theoremstyle{definition}
\newtheorem{example}[theorem]{Example}
\newtheorem{definition}[theorem]{Definition}
\newtheorem{remark}[theorem]{Remark}
\numberwithin{theorem}{section}
\numberwithin{equation}{section}
\numberwithin{table}{section}
\numberwithin{figure}{section}
\numberwithin{algorithm}{section}
\subjclass[2020]{11T71, 06E30, 94A60, 05B10}
\keywords{Bent Function, Normality, Difference Set, Boolean Function, Classification}
\def\blfootnote{\gdef\@thefnmark{}\@footnotetext}
\def\F{\mathbb F}
\def\B{\mathcal B}
\def\rmc(#1,#2){RM(#1,#2)}
\def\form(#1,#2){H(#1,#2)}
\def\val(#1,#2){V(#1,#2)}
\def\boole(#1){ B(#1) }
\def\bst(#1,#2,#3){\boole(#1,#2,#3)}
\def\tildeboole(#1){ \widetilde B(#1) }
\def\fd{{\mathbb F}_2}
\def\fdm{{\mathbb F}^m_2}
\def\aglm{{\textsc{agl}}(m,2)}
\def\der(#1,#2){{\rm Der}_{#2} (#1)}
\def\stab(#1){\textsc{stab}(#1)}
\def\stablevel(#1,#2){\textsc{stab}_{#1}(#2)}
\def\wt{{\rm wt}\,}
\def\rmq(#1,#2,#3){\rmc(#1,#2)/\rmc(#3,#2)}
\def\walsh(#1,#2){\widehat{#1}(#2)}
\def\level(#1){\underset{#1}{=}}
\def\binogauss(#1,#2){\genfrac{[}{]}{0pt}{}{#1}{#2}}
\def\val(#1){{\rm val}(#1)}
\def\anf(#1){{\rm anf}(#1)}
\def\fix(#1,#2,#3,#4){{{\rm fix}^{#1,#2}_{#3}}{(#4)}}
\def\classe(#1,#2,#3){{{\mathcal T}(#1,#2,#3)}}
\def\stab(#1){\textsc{stab}(#1)}
\def\stablevel(#1,#2){\textsc{stab}^{#1}(#2)}
\def\stableveldim(#1,#2,#3){\textsc{stab}^{#1}_{#2}(#3)}
\def\stableveldeg(#1,#2,#3,#4){\textsc{stab}^{#1,#2}_{#3}(#4)}
\def\level(#1){\underset{#1}{\sim}}
\def\bound(#1,#2){\underset{#1}{\overset{#2}{\sim}}}
\def\modulo(#1,#2){\mod\rmc(#1,#2)}
\def\pow#1.#2{\tiny$10^{#1.#2}$}
\def\level(#1){\underset{#1}{=}}
\def\val(#1){{\rm val}(#1)}
\def\anf(#1){{\rm anf}(#1)}
\def\fix(#1,#2,#3,#4){{{\rm fix}^{#1,#2}_{#3}}{(#4)}}
\def\classe(#1,#2,#3){{{\mathcal T}(#1,#2,#3)}}
\def\stab(#1){\textsc{stab}(#1)}
\def\stablevel(#1,#2){\textsc{stab}^{#1}(#2)}
\def\stableveldim(#1,#2,#3){\textsc{stab}^{#1}_{#2}(#3)}
\def\stableveldeg(#1,#2,#3,#4){\textsc{stab}^{#1,#2}_{#3}(#4)}
\def\level(#1){\underset{#1}{\sim}}
\def\bound(#1,#2){\underset{#1}{\overset{#2}{\sim}}}
\def\modulo(#1,#2){\mod\rmc(#1,#2)}
\def\pow#1.#2{\tiny$10^{#1.#2}$}
\def\degV(#1,#2,#3){\deg_{#1+#2}(#3)}
\def\PLUS{$0$} 
\def\MOINS{$1$}
\title{Normality of 8-bit bent functions}
\author{Valérie Gillot}
\author{Philippe Langevin}
\address{Imath, university of Toulon}
\email{valerie.gillot@univ-tln.fr} \email{philippe.langevin@univ-tln.fr}
\author{Alexandr Polujan}
\address{Otto-von-Guericke-Universit\"{a}t}
\email{alexandr.polujan@gmail.com}
\date{\today}
\begin{document}
	
	\maketitle
	\begin{abstract}
		Bent functions are Boolean functions in an even number of variables that are indicators of Hadamard difference sets in elementary abelian 2-groups. A bent function $f$ in $m$ variables is said to be normal if it is constant on an affine space of dimension $m/2$. In this paper, we demonstrate that all bent functions in $m = 8$ variables~--- whose exact count, determined by Langevin and Leander (Des. Codes Cryptogr. 59(1–3): 193–205, 2011), is approximately $2^{106}$~--- share a common algebraic property: every 8-variable bent function is normal, up to the addition of a linear function. With this result, we complete the analysis of the normality of bent functions for the last unresolved case, $m=8$. It is already known that all bent functions in $m$ variables are normal for $m \leq 6$, while for $m \geq 10$, there exist bent functions that cannot be made normal by adding linear functions. Consequently, we provide a complete solution to an open problem by Charpin (J. Complex. 20(2-3): 245-265, 2004).
	\end{abstract}
	\section{Introduction}
	\blfootnote{Parts of this work were presented in~\cite{GLP_BFA_2024} at ‘‘The 9th International Workshop on Boolean Functions and their Applications, BFA 2024’’.}
	
	Let $G$ and $H$ be finite groups written additively. A function $L \colon G \to H$ is called a \emph{homomorphism} if it satisfies the condition $ L(x + a) - L(x) = L(a)$ for all $x, a \in G$. The homomorphisms $L \colon G \to H$ can be viewed as \emph{linear mappings} between the finite groups $G$ and $H$, and they are equivalently characterized by the property:
	$$
	|\{x \in G\colon  L(x+a)-L(x)=b\}| \in\{0, |G|\}.
	$$
	
	Based on this characterization, it is natural to define mappings $F \colon G \to H$ that are considered the ``opposite of linear'' and are thus referred to as \emph{perfect nonlinear}, as follows (see also~\cite{Pott04} and \cite{KoelschPolujan2024}):
	$$|\{x \in G\colon  F(x+a)-F(x)=b\}|=\frac{|G|}{|H|} \quad \mbox{holds for all } a \in G \setminus\{0\}\mbox{ and } b \in H.$$
	
	Perfect nonlinear functions, thanks to their optimal properties, hold significant theoretical and practical importance in various areas of mathematics~\cite{Pott16}. For instance, they serve as valuable primitives in cryptography, offering optimal resistance to linear and differential cryptanalysis when employed in DES-like cryptosystems~\cite{NybergKnudsen92}. Among these functions, two extremal cases have attracted considerable research interest. In the setting $G = H = \mathbb{F}_p^m$, $p$ odd, such functions are referred to as \emph{planar functions}. These have been studied extensively since 1968 in finite geometry, as they enable the construction of finite projective planes~\cite{DembowskiOstrom1968}. On the other hand, when $G = \mathbb{F}_2^m$ and $H = \mathbb{F}_2$, the functions are known as \emph{bent functions}. First introduced by Rothaus~\cite{ROTHAUS1976} in 1976, bent functions have attracted attention from combinatorialists because they provide the construction of difference sets~\cite{MCFARLAND1973,Dillon1974}.
	
	In this article, we focus solely on perfect nonlinear functions $f\colon\F_2^m\to\F_2$, that is, (Boolean) bent functions, which can be viewed as multivariate polynomials in the ring $\F_2[x_1,\ldots,x_m]/(x_1^2+x_1,\ldots,x_m^2+x_m)$. However, a general characterization
	of bent functions remains elusive. What we know is that the number of variables $m$ of a bent function on $\F_2^m$ must be even and its degree (as polynomial) satisfies $2\le\deg(f)\le m/2$; see~\cite{ROTHAUS1976}. Among bent functions of all possible degrees, only quadratic bent polynomials are understood thanks to the 1-to-1 correspondence of quadratic homogeneous bent functions on $\F_2^m$ with invertible symplectic $m \times m$-matrices over $\F_2$ (see~\cite[Chapter 15]{MacWilliamsSloane}) and the fact that addition of an affine function does not affect the bent property (the latter can be seen from the definition of perfect nonlinearity). This connection, in turn, implies the enumeration of all quadratic bent functions in $m=2k$ variables, and their classification: up to the action of the affine general group (which preserves bentness), there exists exactly one quadratic bent function, the  ``scalar product'' $(x,y)\in\F_2^k\times\F_2^k\mapsto \langle x,y \rangle =x_1y_1+\cdots+x_ky_k$.
	
	Although bent functions on $\mathbb{F}_2^k \times \mathbb{F}_2^k$ of all possible degrees can be constructed using the \emph{Maiorana-McFarland construction}~\cite{MCFARLAND1973}, defined by $(x, y) \in \mathbb{F}_2^k \times \mathbb{F}_2^k \mapsto \langle x, \pi(y) \rangle + g(y)$, where $\pi$ is a permutation of $\mathbb{F}_2^k$ and $g$ is an arbitrary Boolean function on $\mathbb{F}_2^k$, the general case beyond quadratic functions is significantly more complex. This complexity is evidenced by progress in solving enumeration and classification problems. Using computational techniques, the exact numbers of bent functions in dimensions 6 and  8, which are $\approx2^{25}$ and $\approx2^{106}$, were obtained in~\cite[p. 258]{Preneel1993} and \cite{PLGL-2011}, respectively. In general, finding good bounds on the number of bent functions is an active research direction, for the recent progress on this subject, we refer to~\cite{PotapovTT24}. Beyond enumeration, computational classification has primarily addressed cubic bent functions in dimensions 6 and 8~\cite{ROTHAUS1976,Dillon1972},~\cite[p. 103]{Braeken2006}, and some specific subclasses of bent functions~\cite{Langevin2011CountingPS,LP_BFA_2024}. Notably, all bent functions in dimension 6 are equivalent to those of the Maiorana-McFarland class~\cite{ROTHAUS1976}. However, finding a unifying construction framework for higher dimensions remains elusive, as computational results~\cite{LP_BFA_2024} show that the known constructions represent only a small fraction of all bent functions in dimension 8.
	
	These computational findings highlight the importance of not only developing new constructions for Boolean bent functions~--- see~\cite{Mesnager2016,Carlet2021} for comprehensive references~--- but also uncovering unifying properties that can characterize ``large'' infinite families of bent functions, as well as describe most, if not all, examples in smaller dimensions.
	
	An example of such a property is normality, introduced by Dobbertin~\cite{DobbertinFSE1994}, which also serves as a measure of the cryptographic complexity of bent functions~\cite[Section 3.1.8]{Carlet2021}. A bent function $f$ in $m=2k$ variables is called \emph{normal} if there exists an affine subspace of dimension $k$ on which $f$ is constant. A broader concept, known as \emph{weak normality}~\cite{BENT}, allows the $2k$-variable bent function to be affine on a flat of dimension $k$. This generalization is meaningful because the bent property is preserved under the addition of affine terms. While most known constructions are weakly normal or normal, identifying bent functions that are non-normal or non-weakly-normal remains a challenging problem both theoretically and computationally. Below, we briefly summarize some important results on the normality of bent functions.
	
	For $m \leq 6$, all bent functions in $m$ variables are known to be normal, thanks to their complete classification. In dimensions $2$ and $4$, bent functions are quadratic due to the degree bound, and hence normal. In dimension 6, there are exactly four equivalence classes of bent functions. Possible sets of representatives for these classes can be found in~\cite{ROTHAUS1976} and~\cite{Carlet1994}; their normality can be verified directly, as demonstrated in~\cite{agievich2008bent}.  
	
	For $m \geq 10$, the existence of non-normal (and non-weakly-normal) bent functions has been proven through a combination of theoretical and computational approaches. The theoretical foundation relies on the result~\cite{BENT} that the direct sum $(x,y)\in\F_2^n\times\F_2^k\mapsto f(x)+g(y)$ of a non-normal (resp. non-weakly-normal) bent function $f$ on $\F_2^n$ and a quadratic, and hence normal, bent function $g$ on $\F_2^k$ is always a non-normal (resp. non-weakly-normal) bent function. Consequently, it suffices to find examples of non-normal and non-weakly-normal bent functions, as most known theoretical constructions are normal~\cite{BENT}. However, identifying such examples and verifying their normality is a non-trivial research task, as the complexity of verification largely depends on the approach used for enumerating affine subspaces~\cite{CHARPIN}.
	
	A highly efficient recursive algorithm for checking normality, based on extending small flats where a function is constant to larger flats with the same property, was introduced in~\cite{BENT}. Using this method, the first non-normal and non-weakly-normal bent functions were identified in dimension $m=14$~\cite{BENT,Leander2006}. These functions are monomials over finite fields of the form $x \mapsto \operatorname{trace}(\alpha x^d)$, where $\alpha \in \mathbb{F}_{2^m}$ and $d\in \mathbb{N}$ are suitably chosen~\cite{BENT,Leander2006}. Later, additional examples were constructed in dimensions $m=12,10,14$; see~\cite{Leander2006,LeanderMcGuire2009}.
	
	This leaves dimension 8 as the only unresolved case. Notably, not all bent functions in this dimension require checking: quadratic bent functions are already known to be normal, and cubic bent functions in dimension 8 were shown to be normal in~\cite{CHARPIN}. Thus, the problem reduces to analyzing bent functions of maximum algebraic degree 4, which was posed as an open problem by Charpin~\cite[Open problem 5]{CHARPIN}.
	
	The case $m = 8$ remained unexplored for a long time until recently, when the first example of a non-normal but weakly normal bent function in this dimension was discovered~\cite{POLUJAN_2025_DAM}. However, finding a non-weakly-normal bent function (provided it exists) was left as an open problem in~\cite{POLUJAN_2025_DAM}. The challenge with this case, compared to smaller dimensions, is based on the fact that the classification of 8-variable bent functions remains an open problem that has yet to be resolved\footnote{The classification of bent functions in dimension 8 is an ongoing computational project of the current authors.}.
	
	In this paper, we propose a computational approach for investigating the (weak) normality of all 8-bit bent functions without classifying all of them. The main result of our paper can be stated as follows:
	$$\emph{All bent functions in 8 variables are normal or  weakly normal.}$$
	
	Besides this result, we introduce the concept of \emph{relative degree} for Boolean functions. This concept can be seen as a generalization of (weak) normality, providing a more detailed characterization of a given function's behaviour on affine flats. Furthermore, we investigate the relative degree of Boolean functions (not limited to bent functions) for up to $m = 8$ variables, including all functions of degree four in 7 variables and all cubic functions in 8 variables.
	
	The paper is organized in the following way. In Section~\ref{sec 2: intro}, we give the necessary background on Boolean functions that will be used throughout the whole article. Section~\ref{sec 3: relative degree} introduces the concept of the relative degree for Boolean (not necessarily bent) functions and explores its possible values for functions up to 7 variables, as well as for cubic functions in 8 variables. In Section~\ref{sec 4: sieve}, we propose a sieving algorithm aimed at finding Boolean functions with high relative degrees. Based on this algorithm, in Section~\ref{sec 5: normality 8-bit bent}, we describe an algorithm for constructing non-weakly normal bent functions in dimension 8. Using this algorithm, we show that all bent functions in eight variables are either normal or weakly normal. Some supplementary computations related to this result are provided in Appendix~\ref{appendix}. In Section~\ref{sec 6: conclusion}, we conclude the paper and outline several open problems related to the normality of Boolean functions.
	
	
\section{Background on Boolean functions}\label{sec 2: intro}
Let $\fd=\{0,1\}$ be the finite field with two elements and let $\F_2^m$ be the vector space of dimension $m$ over $\F_2$. We identify a vector $x=(x_1,\ldots,x_m)\in\F_2^m$ with its integer representation $x=\sum_{i=1}^{m}x_i2^{i-1}$. A mapping $f \colon \fdm \rightarrow\fd$ is called a \emph{Boolean} or \emph{$m$-bit} function. The set of all Boolean functions on $\F_2^m$ is denoted by $\boole(m)$. Any Boolean function $f$ on $\F_2^m$ is uniquely determined by the vector $(f(0),f(1),\ldots,f(2^m-1))\in \F_2^{2^m}$, which is called the \emph{truth table} of the Boolean function $f$. The \emph{Hamming weight} $\wt(f)$ of $f\in\boole(m)$ is defined as the Hamming weight of its truth table.

Every Boolean function has a unique representation as a multivariate polynomial, called the \emph{algebraic normal form (ANF)}:
\begin{equation}\label{ANF}
	f(x_1, x_2, \ldots, x_m ) = f(x) = \sum_{S\subseteq \{1,2,\ldots, m\}} a_S X_S,
	\quad a_S\in\fd, \ {\color{black} X_S = \prod_{s\in S} x_s}.
\end{equation}
The \emph{(algebraic) degree} $\deg(f)$ of a Boolean function $f\in\boole(m)$ is the maximal cardinality of $S$ with $a_S=1$ in the algebraic normal form of $f$. The \emph{valuation} $\val(f)$ of a function $f\in\boole(m)$ is the minimal cardinality of $S$ with $a_S=1$. In this paper, we conventionally fix the degree of a constant function to zero. We call Boolean functions $f\in B(m)$ with $\deg(f)=1$ \emph{affine}, with $\deg(f)=2$ \emph{quadratic}, with $\deg(f)=3$ \emph{cubic} and with $\deg(f)=4$ \emph{quartic}.

We define $\boole(s, t, m)$ as the set of Boolean functions in $m$ variables of valuation greater than or equal to $s$ and of algebraic degree less than or equal to $t$ along with the constant zero function, i.e.,  
$$
\boole(s, t, m) = \{f \in \boole(m) \colon \val(f) \geq s \text{ and } \deg(f) \leq t\} \cup \{0\}.$$
Note that $\boole(s,t,m)=\{0\}$ whenever $s>t$. It is clear that the sets $\boole(s, t, m)$ are vector spaces.

The affine general linear group of $\fdm$, denoted by $\aglm$, acts naturally over all these spaces, therefore it is essential to introduce the following equivalence relations. A system of representatives of $\boole(s,t,m)$ under the action of $\aglm$ is denoted by $\tildeboole(s,t,m)$. We will call Boolean functions $f,f'\in B(m)$ \emph{affine equivalent} if $f'=f\circ A$ for some $A\in\aglm$. A slightly more general notion of equivalence can be defined as follows. We say that two Boolean functions $f,f'\in B(m)$ are \emph{extended-affine equivalent (EA-equivalent)}, if $f'(x)=f(A(x))+a(x)$ holds for all $x\in\F_2^m$, where $A\in\aglm$ and $a$ is an \emph{affine} Boolean function on $\F_2^m$. This equivalence relation is especially useful in the study of bent functions since it trivially preserves the bent property, which is defined in the following way.

\begin{definition}
	Let $m$ be even. A Boolean function $f\in B(m)$ is called \emph{bent} if for all non-zero $a\in\F_2^m$ and for all $b\in\F_2$ the equation
	$f(x+a)+f(x)=b$ has $2^{n-1}$ solutions $x\in\F_2^n$.
\end{definition}

In the past few decades, many infinite families of bent functions have been introduced; see, e.g.,~\cite{Carlet2021,CM_Four_Decades_2016, Mesnager2016}. Proving that given infinite families of bent functions are (not) EA-equivalent is a well-known difficult problem. However, thanks to the following connection to the coding theory, it is possible to solve equivalence problems computationally for Boolean functions in a small number of variables (e.g., with the help of Magma~\cite{MAGMA:1997}). For a Boolean function $f\in B(m)$, define the linear code $\mathcal{C}_f$ as the row-space generated by the following matrix
\begin{equation*}
	{\bf G}_f= \left(\begin{array}{cccc}
		{1}&  {1}&
		{\cdots}&{1}\\[.5em]
		{x_1} & {x_2} & 
		{\cdots}& {x_{2^n}}\\[.5em]
		{f(x_1)} & {f(x_2)} 
		&\cdots & {f(x_{2^n})}\end{array}\right),
\end{equation*}
where all $x_i\in\F_2^m$. It is well-known that Boolean functions $f$ and $f'$ on $\F_2^m$ are EA-equivalent if and only if the linear codes $\mathcal{C}_f$ and $\mathcal{C}_{f'}$ are permutation equivalent (i.e., they are equal up to a permutation of the columns); see~\cite{Budaghyan2010CCZ_EA,EdelP09}. We provide an example demonstrating how to verify the EA-equivalence of Boolean functions using Magma.

\begin{example}
	Define the following three bent functions from the Maiorana-McFarland bent class in $n=6$ variables as follows:
	\begin{equation*}
		\begin{split}
			f(x)&=x_1x_4 + x_2x_5 + x_3x_6, \\
			g(x)&=x_1x_4 + x_2x_5 + x_3x_6 + x_1 x_2 x_3, \\
			h(x)&=g(xA+b)+x_6+1,
		\end{split}
	\end{equation*}	
	where the affine non-degenerate mapping $x=(x_1,x_2,x_3,x_4,x_5,x_6)\in\F_2^6 \mapsto xA+b$ is defined by
	$$x\mapsto \begin{pmatrix}
	    1 + x_2 + x_4\\
            x_1 + x_2 + x_4 + x_6\\
            x_2 + x_4 + x_6\\
            1 + x_1 + x_2 + x_5\\
            x_1 + x_2\\
            1 + x_2 + x_3 + x_5
	\end{pmatrix}^T.$$
	By construction, the functions $g$ and $h$ are EA-equivalent and are cubic, i.e., $\deg(g)=\deg(h)=3$. Since $f$ is quadratic, i.e., $\deg(f)=2$, we have that both $g$ and $h$ are EA-inequivalent to $f$. This fact can be verified using the following transcript of a Magma session, which can be run using the \href{https://magma.maths.usyd.edu.au/calc/}{Magma calculator}.
\end{example}

%
%
%
%
%

    \begin{center}
        \includegraphics[scale=1]{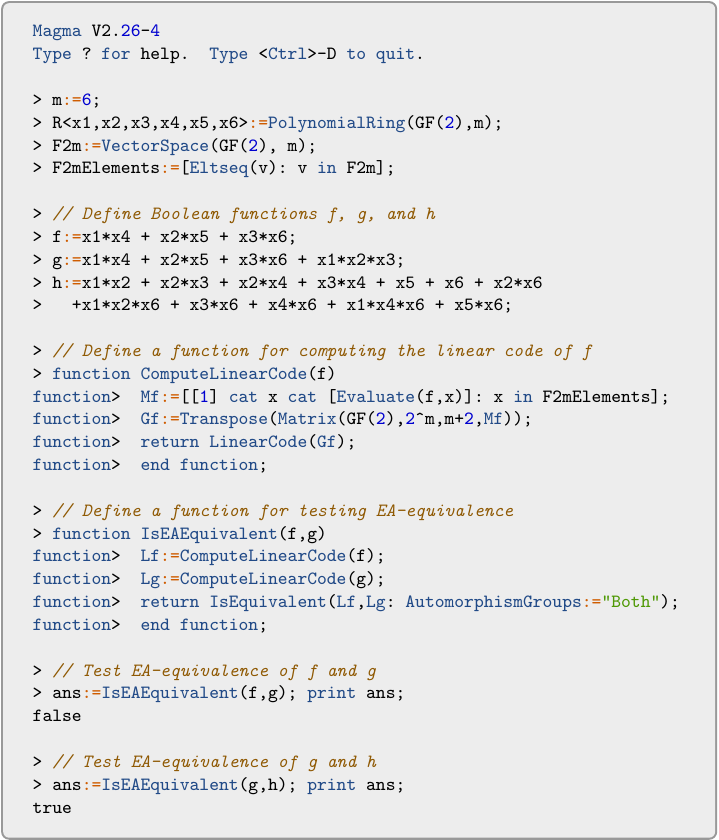}
    \end{center}

    \paragraph{Normality of Boolean functions.} Let $V$ be a linear subspace of $\fdm$. We denote by $\B(V)$ the space of Boolean functions from $V$ to $\fd$. For $a\in \fdm$, we denote by $f_{a,V}$ the Boolean function of $\B(V)$ defined by $v\in V \mapsto f(a+v)$, which describes the restriction of $f$ on the affine space (flat) $V+a$. With these definitions, we present the notion of normality, which was originally introduced by Dobbertin~\cite{DobbertinFSE1994} for bent functions. We follow the terminology of Charpin~\cite{CHARPIN} that applies to Boolean (not necessarily bent) functions.
	\begin{definition}
		A Boolean function $f\in\boole(m)$ is said to be \emph{normal} if there exists an affine subspace $V+a$ of $\fdm$ with dimension $\lceil m/2 \rceil$, on which $f_{a,V}$ is constant, i.e., $\deg(f_{a,V})=0$. If such a flat does not exist, $f$ is called \emph{non-normal}. A function $f\in\boole(m)$ is called \emph{weakly normal} if $f_{a,V}$ is affine, and not constant, i.e., $\deg(f_{a,V})=1$, on an affine subspace $V+a$ of dimension  $\lceil m/2 \rceil$. If such a flat does not exist, $f$ is called \emph{non-weakly-normal}. 
	\end{definition}
	
	It is clear that affine equivalence preserves normality and weak normality. In the following example, we illustrate in more detail these two notions.
	
	\begin{example}\label{ex: 1}
	    Consider the following Boolean function $g\in\mathcal{B}_5$, which is defined by its ANF as follows
    \begin{equation}\label{ex: near-bent}
        g(x)=x_1 x_4 + x_2 x_4 + x_3 x_4 + x_2 x_3 x_4 + x_2 x_5 + x_3 x_5 + x_1 x_3 x_5.
    \end{equation}
    First, we show that $g$ is weakly normal. Define the flat $V_1+a_1$ in $\F_2^5$ as
    $$ V_1+a_1=\{ xA_1\colon x\in\F_2^3 \}+a_1,$$
    where the $3\times5$ full-rank matrix $A_1$ and the vector $a_1$ are given by 
    $$A_1=\left(
\begin{array}{ccccc}
 0 & 0 & 1 & 0 & 0 \\
 0 & 0 & 0 & 1 & 0 \\
 0 & 0 & 0 & 0 & 1 \\
\end{array}
\right)\quad\mbox{and}\quad a_1=(1, 1, 0, 0, 1).$$
 In turn, the restriction of $g$ on $V_1+a_1$ is given by $g_{a_1,V_1}(x)=g(xA_1+a_1)=1 + x_3$, for all $x\in\F_2^3$. Since $\deg(g_{a_1,V_1})=1$ for the flat $V_1+a_1$, the function $g$ is weakly normal. Now, define the flat $V_2+a_2$ in $\F_2^5$ as
    $$ V_2+a_2=\{ xA_2\colon x\in\F_2^3 \}+a_2,$$
    where the $3\times5$ full-rank matrix $A_2$ and the vector $a_2$ are given by 
        $$A_2=\left(
\begin{array}{ccccc}
 1 & 0 & 0 & 1 & 0 \\
 0 & 1 & 0 & 1 & 1 \\
 0 & 0 & 1 & 0 & 0 \\
\end{array}
\right)\quad\mbox{and}\quad a_2=(1, 0, 1, 0, 1).$$
 The restriction of $g$ on $V_2+a_2$ is given by $g_{a_2,V_2}(x)=g(xA_2+a_2)=0$, for all $x\in\F_2^3$. Since $\deg(g_{a_2,V_2})=0$ for the flat $V_2+a_2$, the function $g$ is normal.
\end{example}

  We note that Dickson's classification of quadratic functions enables a complete analysis of their normality. In the following remark, we summarize the analysis of the normality of quadratic functions presented in detail in~\cite[Appendix]{CHARPIN}.
	\begin{remark}
		Suppose $f\in B(m)$ has degree 2. From the classification of quadratic forms~\cite[pp. 438–442]{MacWilliamsSloane}, we know that if $f$ is balanced, that is, $\wt(f)=2^{m-1}$, then $f$ is affine equivalent to $$Q_k(x)=x_1 x_2+x_3 x_4+\ldots+x_{2 k-1} x_{2 k}+x_{2 k+1},$$ for some $k \leq(m-1) / 2$. If $f$ is not balanced, then $f$ is affine equivalent to $$Q_{k,b}(x)=x_1 x_2+x_3 x_4+\ldots+x_{2 k-1} x_{2 k}+b,$$ for some $k \leq m / 2$ and $b$ in $\F_2$. More precisely, if $\wt(f)<2^{m-1}$, then $b=0$, if $\wt(f)>2^{m-1}$, then $b=1$. Consequently, if $m$ is even, all quadratic functions are normal since the functions $Q_k$ and $Q_{k,b}$ on $\F_2^m$ are constant on the vector space $V=\langle e_1, e_3,\ldots, e_{m/2} \rangle$ of dimension $m/2$, where $e_k$ is the $k$-th unit vector. The same argument shows that for $m$ odd, all quadratic functions that are affine equivalent to $Q_k$ and $Q_{k,b}$ on $\F_2^m$, where $k<(m-1)/2$ are normal since they are constant on the vector space $V=\left\langle e_1, e_3,\ldots, e_{\frac{m-1}{2}},e_{\frac{m-1}{2}+1} \right\rangle$ of dimension $\lceil m/2 \rceil$. If $k=(m-1)/2$, the function $Q_k$ restricted to the same subspace $V$ is linear and hence non-normal (but weakly normal).
	\end{remark}
	
	Note that determining (weak) normality theoretically is more challenging for non-quadratic functions, as their classification is not known for all numbers of variables. The next section explores this issue in detail, concentrating on Boolean functions in a small number of variables.
	
	
	\section{Relative degree --- A generalization of normality}\label{sec 3: relative degree}

	Since the introduction of this concept, normality has been studied not only in the context of bent functions but also for Boolean functions. For a recent and thorough survey on the topic, we refer the reader to~\cite[Section 3.1.8]{Carlet2021}. A common research question in this direction is whether a given function or even a class of functions is normal or weakly normal. However, we want to know more: how exactly are these functions non-normal? To be more precise, what is the maximum degree of a given Boolean function while restricted to all flats of a certain fixed dimension? To answer the latter question, we introduce the following generalization of normality.
	
	\begin{definition}
		The \emph{relative degree} of a Boolean function $f\in B(m)$ with respect to an affine space $a+V$ of $\F_2^m$ is the degree of $f_{a,V}$, which is denoted by $\degV(a,V,f)=\deg(f_{a,V})$. In this context, we set conventionally $\degV(a,V,f)=0$ if $f_{a,V}=0$.
		We define the \emph{$r$-degree} of $f\in\boole(m)$ as the minimal relative degree of $f$ 
		for all affine spaces $a+V$ of $\F_2^m$, where $\dim (V)=r$:
		$$\deg_r(f)=\min\{\degV (a,V,f) \mid \dim V=r \;  \text{and}\;  a\in \fdm \}.$$
	\end{definition}
	
	We note that similarly to normality and weak normality, the $r$-degree of a Boolean function is invariant under affine equivalence.
	
	\begin{remark}
		It is interesting to note that the concept of $r$-degree appears independently in several recent studies. In~\cite{STABILITY}, the authors are 
		interested in Boolean functions preserving their degree by hyperplane restriction. In~\cite{HAUGLAND}, the $r$-degree of $f$ is denoted by $\alpha(f, r)$, and the parameter $g(m,r) := \max_{f} \alpha(f, r)$ is studied. In this paper, for a given $k$, we aim to bound the maximal $r$-degree of a Boolean function of degree $k$.
	\end{remark}

	For a given Boolean function $f\in B(m)$, one can use the values of $\lceil m/2 \rceil$-degree in order to determine normality and weak normality as follows:
	\begin{equation}\label{eq: normality types}
		\deg_{\lceil m/2 \rceil}(f)=\begin{cases}
			0, & \text{$f$ is normal};\\
			1,& \text{$f$ is non-normal and weakly normal};\\
			\geq 2,& \text{$f$ is non-normal and non-weakly-normal}.\\
		\end{cases}
	\end{equation}
	Based on this correspondence, we refer to Boolean functions that are both non-normal and non-weakly normal as abnormal.
	\begin{definition}
		We call a Boolean function $f\in B(m)$ with $\deg_{\lceil m/2 \rceil}(f)\ge 2$ \emph{abnormal}.
	\end{definition}
	 	
	To determine whether a function $f \in \boole(m)$ is abnormal, we need to verify that the function $f$ is constant or affine on a flat of dimension $\lceil m/2 \rceil$. This is equivalent to checking that the function is constant on two parallel flats, i.e., distinct translates of the same linear subspace of dimension $\lceil m/2 \rceil - 1$. This observation motivates Algorithm~\ref{ABNORMAL} for checking the abnormality of a given Boolean function.
	
	\begin{algorithm}
		\caption{$\text{isABNORMAL}(f)$}
		\begin{algorithmic}[1]
			\Require A Boolean function $f$ in $\boole( m )$.
			\Ensure If $f$ is abnormal return \textsc{true}, else return \textsc{false}.
			
			\For{all  vector spaces  V of dimension $\lceil m/2 \rceil-1$  }
			\State  count := 0 
			\For{all  $a$  in $\fdm / V$  }
			\If {$f$ is constant on $a+V$}
			\State  count++
			\If {count = 2 } 
			\State  return \textsc{false} 
			\EndIf
			\EndIf
			
			\EndFor
			\EndFor
			\State \Return \textsc{true} 
		\end{algorithmic}
		\label{ABNORMAL}
	\end{algorithm}
	
	\begin{remark}
		There are $99\,471$ 3-dimensional linear subspaces in $\F^8_2$ and $3\,108\,960$ $\approx 2^{21.6}$ 3-dimensional affine spaces, meaning that (the naive) Algorithm~\ref{ABNORMAL} is sufficiently efficient for $m\leq 8$, the dimension of our interest. For a more efficient algorithm, see~\cite{BENT}.
	\end{remark}
	
	In the remaining part of this section, we investigate the following combinatorial parameter that we use to analyze the type of normality for Boolean functions in a small number of variables in the sense of Eq.~\eqref{eq: normality types}. For integers $r\leq m$ and $k\leq m$, we denote by
	\begin{equation}\label{eq: Drkm eq}
		D^\dag_r( k, m) = \max \{\deg_r( f ) \colon f\in B(m) \mbox{ and }  \deg(f)  = k \}
	\end{equation}
	the maximum $r$-degree of all $m$-bit Boolean functions of degree \emph{equal} to $k$. With this value, one can study the normality of Boolean functions as follows. For instance, when $D^\dag_{\lceil m/2 \rceil}(k, m) = 0$, it indicates that all $m$-bit functions of degree $k$ are normal. If $D^\dag_{\lceil m/2 \rceil}(k, m) = 1$, it implies that all $m$-bit functions of degree $k$ are either normal or weakly normal. Lastly, a value of $D^\dag_{\lceil m/2 \rceil}(k, m) \geq 2$ indicates the existence of abnormal $m$-bit functions of degree $k$.
	
	Our next objective is to determine the values $D^\dag_r( k, m)$ for different values of $r$ and $k$ when the number of variables is $m=5,6$ or $7$. These findings will enable us to verify certain known results regarding the normality of Boolean functions and to extend our study of normality to the case $m=8$. To do so, we introduce the following combinatorial parameter
	\begin{equation}\label{eq: Drkm leq}
		D_r( k, m) = \max \{\deg_r( f ) \colon f\in B(m) \mbox{ and }  \deg(f) \leq  k \},
	\end{equation}
	which is the maximum $r$-degree of all $m$-bit Boolean functions of degree \emph{less or equal} to $k$. From the definitions of $D^\dag_r( k, m)$ and $D_r( k, m)$, it is also clear that 
	\begin{equation}\label{eq: Drkm leq computing}
		D_r( k, m) = \max \{ D^\dag_r( l, m) \colon 1\le l\le k  \}.
	\end{equation}
	In the following lemma, we summarize the connections between the values $D^\dag_r( k, m)$ and $D_r( k, m)$ and prove the non-decreasing property of the mapping $m\mapsto D_r( k, m)$.
	
	\begin{lemma}
		\label{MONO}
	     The relative degree satisfies: 
		\begin{itemize}
		    \item[(1)] For all integers $0 \leq k\leq m$:
		    $$ D^\dag_r( k, m)\le D_r( k, m),$$
	    \item[(2)] For all integers $0 \leq m \leq k \leq m'$:
		        $$D_r(k, m') \leq  D_r( \min\{k,m\} , m).$$
		 \end{itemize}
	\end{lemma}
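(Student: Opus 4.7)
The plan is to handle both parts by elementary containment / restriction arguments; no substantive computation is required.

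Part (1) is essentially a tautology. The set $\{f \in \boole(m) : \deg(f) = k\}$ is contained in $\{f \in \boole(m) : \deg(f) \leq k\} \cup \{0\}$, which is exactly the set of competitors for $D_r(k,m)$, so maximising $\deg_r$ over the smaller set cannot exceed maximising over the larger one. I would dispatch this in a single line.

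For part (2), my plan is a flat-restriction argument. Given any $f \in \boole(m')$ with $\deg(f) \leq k$, I would fix an arbitrary flat $a + W$ of dimension $m$ inside $\F_2^{m'}$~--- which exists because $m \leq m'$~--- and consider the restriction $f_{a,W}$, viewed as an element of $\B(W) \cong \boole(m)$. Two elementary bounds apply to its degree: $\deg(f_{a,W}) \leq k$, because composing $f$ with the affine parametrisation of $a+W$ cannot raise the algebraic degree, and $\deg(f_{a,W}) \leq m$, because any Boolean function in $m$ variables has degree at most $m$. Together these give $\deg(f_{a,W}) \leq \min\{k,m\}$.

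The second ingredient is monotonicity of the $r$-degree under flat restriction: every $r$-dimensional flat sitting inside $a + W$ is in particular an $r$-dimensional flat of $\F_2^{m'}$, on which $f$ and $f_{a,W}$ induce the same restriction. Hence the minimum defining $\deg_r(f_{a,W})$ is taken over a subset of the flats entering $\deg_r(f)$, giving
$$\deg_r(f) \leq \deg_r(f_{a,W}) \leq D_r(\min\{k,m\}, m).$$
Maximising over $f \in \boole(m')$ of degree at most $k$ yields the claim. I do not expect any serious obstacle: the whole argument is bookkeeping about index sets of $\min$'s and $\max$'s, the only quasi-substantive input being the standard fact that an affine change of variables does not raise the algebraic degree of a Boolean polynomial.
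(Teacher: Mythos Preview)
Your proposal is correct and follows essentially the same route as the paper: part (1) is the trivial containment of index sets, and part (2) is the flat-restriction argument~--- restrict $f\in\boole(m')$ to an $m$-dimensional flat, observe that the restriction has degree at most $\min\{k,m\}$, and use that every $r$-flat of the small space is an $r$-flat of the ambient one. The paper's proof is terser (it cites Eq.~\eqref{eq: Drkm leq computing} for (1) and compresses (2) into one sentence), but your version spells out the same two ingredients.
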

	\begin{proof}
		The first claim follows from Eq.~\eqref{eq: Drkm leq computing}.
		 To prove the second claim, we observe that the restriction of $f'\in B(m')$ with $\deg(f')\le k$ to a vector space of dimension $m$ can be identified with a function $f\in B(m)$ with $\deg(f)\le k$ whose $r$-degree is less or equal to $D_r(k, m)$.
	\end{proof}
	\begin{remark}
		In general, determining $D_r(k, m)$ is a challenging problem, as we will now explain. The affine general linear group acts on the set of affine spaces of a given dimension, preserving the degree of functions. Consequently, the relative degrees are affine invariants of $\boole(m)$. This allows us to utilize the classification of Boolean functions to determine some of these parameters. Each linear subspace of dimension $r$ ($r$-space) of $\mathbb{F}_2^m$ has $2^{m-r}$ cosets, and the number of $r$-spaces is given by the Gaussian binomial coefficient $\binogauss(m, r)$. Considering the computational cost $r2^{r}$ of determining the degree of a Boolean function in $r$ variables~\cite[pp. 33-34]{Carlet2021}, the work factor required to compute the $r$-degree of Boolean functions in $\boole(s, t, m)$ through brute force is as follows:
	\begin{equation}\label{eq: working factor}
			W( r, s, t ,m) =  |\tildeboole(s,t,m)| \times  2^{m-r} \binogauss(m,r)\times r2^{r},\quad   
		\binogauss(m,r)=\prod_{i=0}^{r-1} \frac{2^m - 2^{i}}{2^r - 2^i}.
	\end{equation}
	\end{remark}
		In~\cite{project}, the first two authors of this article classified some of the spaces $\boole(s,t,m)$ under $\aglm$ and determined the cardinalities of the equivalence classes $\tildeboole(s,t,m)$. Table~\ref{CLASS} lists these cardinalities of selected spaces $\boole(s,t,m)$, used later to compute $D^\dag_r(k, m)$ and estimate the working factor in Eq.~\eqref{eq: working factor} for each case.
	
	\begin{table}[h]
		\small
		\caption{Some spaces $\boole(s,t,m)$ and the values $|\tildeboole(s,t,m)|$}
		\begin{tabular}{|c|c|c|c|c|}
			\hline 
			$\boole(s,t,m)$
			&$\bst(1,5,5)$
			& $\bst(1,6,6)$
			&$\bst(1,3,7)$  
			&$\bst(2,4,7)$ \\
			\hline
			$|\tildeboole(s,t,m)|$
			& $206$ 
			&$7\,888\,299$
			&$1\,890$ 
			&$118\,140\,881\,980$
			\\
			\hline
		\end{tabular}
		\begin{tabular}{|c|c|c|}
			\hline 
			 $\bst(3,4,7)$
			 &$\bst(4,7,7)$
			 &$\bst(5,7,7)$\\
			\hline
			$68\,443$ 
			&$3\,486$
			&$12$ \\
			\hline
		\end{tabular}
		\label{CLASS}
	\end{table}
	
Since the classification of the spaces $\boole(1,5,5)$ and $\boole(1,6,6)$ is known~\cite{bfapaper2022}, we can use the systems of representatives of equivalence classes from~\cite{project} to compute the 
values $D^\dag_r( k, 5)$ and $D^\dag_r( k, 6)$, which we present in Tables~\ref{DR5} and~\ref{DR6}, respectively. Note that for $m=6$, $r=4$ and $|\tildeboole(1,6,6)|=7\,888\,299 \approx 2^{23}$. In this way, from Eq.~\eqref{eq: working factor}, we deduce that the work factor is about $2^{40}$. Using an 80-core computer, we compute all values in Tables~\ref{DR5} and~\ref{DR6} in a matter of minutes. From the values $D^\dag_{\lceil m/2 \rceil}(k, m)$, where $m=5,6$ and $1\le k\le m$, we deduce:

	\begin{theorem}\label{NORMAL6}
		 All Boolean functions in $\boole(5)$ are normal or weakly normal. All Boolean functions in $\boole(6)$ are normal.
	\end{theorem}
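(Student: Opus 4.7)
The plan is purely computational, exploiting the fact that relative degree (and hence $\deg_{\lceil m/2\rceil}$) is an affine invariant of Boolean functions. Since a function $f\in\boole(m)$ is normal exactly when $\deg_{\lceil m/2\rceil}(f)=0$ and (weakly) normal exactly when $\deg_{\lceil m/2\rceil}(f)\le 1$ by Eq.~\eqref{eq: normality types}, it suffices to bound the parameter $D^\dag_{\lceil m/2\rceil}(k,m)$ for every admissible degree $k$, and then invoke Eq.~\eqref{eq: Drkm leq computing} to assemble $D_{\lceil m/2\rceil}(k,m)$. For $m=5$ the target dimension is $\lceil 5/2\rceil=3$ and we must check $D^\dag_3(k,5)\le 1$ for $k=1,\dots,5$; for $m=6$ the target dimension is $\lceil 6/2\rceil=3$ and we must check $D^\dag_3(k,6)=0$ for $k=1,\dots,6$.

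First, I would reduce the search from all of $\boole(m)$ to a transversal $\tildeboole(1,m,m)$ under the action of $\aglm$, using the classifications of $\boole(1,5,5)$ and $\boole(1,6,6)$ recalled in Table~\ref{CLASS} (cardinalities $206$ and $7\,888\,299\approx 2^{23}$, respectively). Then, for each representative $f$, I would enumerate all $3$-dimensional flats of $\F_2^m$: these are parametrized by the $\binogauss(m,3)$ linear $3$-spaces $V$ together with a coset representative $a\in\F_2^m/V$, giving $2^{m-3}\binogauss(m,3)$ flats. For each flat $a+V$, I would compute the ANF of the restriction $f_{a,V}\in \boole(3)$ and record its algebraic degree. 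Taking the minimum over flats yields $\deg_3(f)$; the maximum of this quantity over representatives of degree exactly $k$ is $D^\dag_3(k,m)$.

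The total work is governed by Eq.~\eqref{eq: working factor}: for $m=6$, $r=3$, this is roughly $2^{23}\times 2^3\times\binogauss(6,3)\times 3\cdot 2^3\approx 2^{40}$ elementary operations, which the authors report takes minutes on an 80-core machine; for $m=5$ the cost is comparatively negligible. The resulting tables of values $D^\dag_r(k,m)$ (Tables~\ref{DR5} and~\ref{DR6}) can then be read off: extracting the column $r=3$ confirms $D^\dag_3(k,5)\le 1$ for all $k\le 5$ and $D^\dag_3(k,6)=0$ for all $k\le 6$, which is exactly the statement of the theorem.

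The main obstacle is not conceptual but computational: managing the $\sim 2^{23}$ equivalence classes for $m=6$ and performing the restriction-and-ANF computation efficiently. This is mitigated by (i) exploiting affine equivalence to avoid touching the $\approx 2^{64}$ functions of $\boole(1,6,6)$ directly, (ii) parallelising over representatives on a many-core machine, and (iii) using a fast Möbius transform to obtain the ANF of each restriction $f_{a,V}$ in time $O(r\,2^r)$. Any potential tightness worry, namely that some class of degree $k$ might attain a relative degree above the claimed bound, is ruled out directly by the exhaustive tabulation.
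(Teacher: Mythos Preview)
Your proposal is correct and essentially mirrors the paper's own argument: both rely on the known classifications of $\boole(1,5,5)$ and $\boole(1,6,6)$, compute $\deg_r$ exhaustively on a transversal $\tildeboole(1,m,m)$ to fill Tables~\ref{DR5} and~\ref{DR6}, and then read off the row $r=3=\lceil m/2\rceil$. The only cosmetic difference is that the paper quotes the $\approx 2^{40}$ work factor for $r=4$ (since it tabulates all rows at once), whereas you compute it directly for the relevant row $r=3$.
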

	Note that the latter claim was proven in \cite{dubuc}; thus, we confirm the correctness of this result. To proceed with the values $D^\dag_r( k, 7)$, we first fill the first three columns of Table~\ref{DR7} using the classification of $\boole(1,3,7)$ from~\cite{bfapaper2022,project}.  Tables~\ref{DR5} and~\ref{DR6} in a matter of minutes. From the value $D^\dag_{4}(3, 7)$, we deduce:
	
	\begin{theorem}\label{CUBICFACT7}	  
		All Boolean cubics in  $\boole(7)$ are normal or weakly normal.
	\end{theorem}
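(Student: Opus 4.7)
The plan is to reduce the theorem to a finite enumeration that exploits two facts already set up in the paper: (i) by the remark following the definition of the $r$-degree, the value $\deg_4(f)$ is an $\aglm$-invariant, and (ii) addition of a constant does not change any restriction's degree. Consequently, it is enough to verify $\deg_4(f) \le 1$ on a single representative of each $\aglm$-orbit of cubic Boolean functions on $\F_2^7$ with zero constant term, i.e.\ on the representatives of degree three living inside the classification $\tildeboole(1,3,7)$.

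By Table~\ref{CLASS}, the $\aglm$-classification of $\boole(1,3,7)$ from~\cite{bfapaper2022,project} provides exactly $|\tildeboole(1,3,7)| = 1890$ orbit representatives. First I would extract the subset of those representatives whose algebraic degree is exactly three. Then, for each such representative $f$, I would run a variant of Algorithm~\ref{ABNORMAL} tailored to the weak-normality threshold: enumerate the $\binogauss(7,4) = 11811$ four-dimensional linear subspaces $V \le \F_2^7$, and for each $V$ and each of the $2^{3} = 8$ cosets $a+V$ compute the ANF of the restriction $f_{a,V} \in \B(V)$ and read off its degree; stop as soon as some coset yields a restriction of degree at most one, which certifies that $f$ is normal or weakly normal. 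If no such coset existed for some $f$, an abnormal cubic in seven variables would have been uncovered and the theorem would fail.

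The overall cost, following Eq.~\eqref{eq: working factor} with $m=7$, $r=4$, is of the order $|\tildeboole(1,3,7)| \cdot 2^{m-r}\binogauss(m,r) \cdot r\,2^{r}\approx 2^{34}$ elementary operations, i.e.\ comparable to the enumerations already carried out for Tables~\ref{DR5} and \ref{DR6} on the authors' 80-core machine. Once this enumeration returns the entry $D^\dag_4(3,7) = 1$ that will populate the relevant cell of Table~\ref{DR7}, the dictionary of Eq.~\eqref{eq: normality types} immediately delivers the statement: every $7$-variable cubic is normal or weakly normal.

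The principal obstacle is practical rather than conceptual. One must trust the precomputed $\aglm$-classification of $\boole(1,3,7)$ and implement the restriction-and-degree subroutine carefully enough to keep each of the roughly $1890\cdot 8\cdot 11811 \approx 1.8\cdot 10^{8}$ (representative, subspace, coset) triples cheap; an early-exit on the first flat of degree $\le 1$ is essential for this to terminate in minutes rather than hours. Unlike the abnormality test, here we are not trying to certify a rare extremal behaviour but rather to confirm a favourable one, so for most representatives the loop should exit very early, which is the practical reason one can afford the brute-force approach at all.
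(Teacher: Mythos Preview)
Your proposal is correct and is essentially the paper's own argument: use the precomputed $\aglm$-classification of $\boole(1,3,7)$ (the $1890$ representatives from Table~\ref{CLASS}) to populate the $k=3$ column of Table~\ref{DR7} by brute-force computation of the $4$-degree, read off $D^\dag_4(3,7)=1$, and conclude via Eq.~\eqref{eq: normality types}. The paper states this slightly more tersely---it simply fills the first three columns of Table~\ref{DR7} from the classification data and reads off the theorem---but your explicit description of the per-representative flat enumeration, the work-factor estimate, and the early-exit heuristic all match what is implicit in the paper's methodology for Tables~\ref{DR5}--\ref{DR7}.
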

	
	\begin{table}[h!]
		\caption{The values of $D_r^\dag(k, m)$, for  $m=5,6,7$}
		\centering
		\begin{minipage}{\textwidth}
			\begin{subtable}[t]{0.45\textwidth}
				\centering
				\caption{$D^\dag_r( k, 5)$}
				\begin{tabular}{|c|ccccc|}
					\hline
					\diagbox[width=18pt,height=18pt]{$r$}{$k$}  &1 &2 &3 &4 &5\\
					\hline
					\hline
					4 &0&2 &2 &3 & 2\\
					3 &0 &1 &1 &1 & 1\\
					
					2 &0 &0 &0 &0 & 0\\
					\hline
				\end{tabular}
				\label{DR5}
			\end{subtable}%
			\hfill
			\begin{subtable}[t]{0.45\textwidth}
				\centering
					\caption{$D^\dag_r( k, 6)$}
				\begin{tabular}{|c|cccccc|}
					\hline
					\diagbox[width=18pt,height=18pt]{$r$}{$k$}  &1 &2 &3 &4 &5 & 6 \\
					\hline
					\hline
					5 & 0 & 2&	3& 	4& 	3&	4 \\
					4 &0 &	2&	2&	2& 	2&	2\\
					3 &0&	0&	0 &	0& 	0&	0\\
					2 &0&	0&	0&	0 &	0&	0\\
					\hline
				\end{tabular}
				\label{DR6}
			\end{subtable}
		\end{minipage}
		
		\vspace{0.1cm}
		
		\begin{minipage}{\textwidth}
			\centering
			\begin{subtable}[t]{0.9\textwidth}
				\centering
				\caption{$D^\dag_r( k, 7)$}
				\def\rind{\textcolor{black}{1 or 2}}
				\def\rund{\textcolor{black}{3 or 4}}
				\def\rand#1{\textcolor{black}{$\geq #1$}}
				\def\rond#1{\textcolor{red}{$#1$}}
				\def\rend#1{\textcolor{blue}{$#1$}}
				\begin{tabular}{|c|ccc|cccc|}
					\hline
					\diagbox[width=18pt,height=18pt]{$r$}{$k$} 	&1 	&2 	&3 	&4 	&5 	&6	&7\\
					\hline
					\hline
					6	&0	&2 	&3	&\rand 4 	&\rand 4 	&5	    &\rand 4\\
					5	&0	&2	&3	&\rund  &\rund  	&\rund	&\rand 3\\
					4	&0	&1	&1	&  \dbox{2} &\rind	&\fbox{2}	&\rind \\[0.5mm] \hline
					3	&0	&0	&0	&0	&0 	&0	&0\\
					2	&0	&0	&0	&0  &0 	&0	&0\\
					\hline
				\end{tabular}
				\label{DR7}
			\end{subtable}
		\end{minipage}
		\label{D-values} 
	\end{table}
			
	In~\cite{dubuc}, it is proved that all 8-bit cubic \emph{bent} functions are normal. With our results, we can generalize this statement even further:
	
	\begin{theorem}
		All Boolean cubics in $\boole(8)$ are normal or weakly normal.
	\end{theorem}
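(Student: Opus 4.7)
The plan is to deduce the 8-variable statement from the already-established 7-variable version (Theorem~\ref{CUBICFACT7}) by means of the monotonicity of the $r$-degree recorded in Lemma~\ref{MONO}. Concretely, for a cubic $f\in\boole(8)$, I want to exhibit a $4$-dimensional flat $V+a\subseteq\F_2^8$ on which $f_{a,V}$ has degree at most $1$; by the correspondence in Eq.~\eqref{eq: normality types}, this is exactly the condition that $f$ is normal or weakly normal.

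First I would apply Lemma~\ref{MONO}(2) with $r=4$, $k=3$, $m=7$ and $m'=8$, which yields $D_4(3,8)\le D_4(3,7)$. The right-hand side is extracted from Table~\ref{DR7} via Eq.~\eqref{eq: Drkm leq computing}: the entries $D^\dag_4(1,7)=0$, $D^\dag_4(2,7)=1$ and $D^\dag_4(3,7)=1$ combine to $D_4(3,7)=1$. Combining this with Lemma~\ref{MONO}(1) then forces $D^\dag_4(3,8)\le 1$, which is exactly the desired theorem.

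The geometric content behind this chain is transparent: any $7$-dimensional hyperplane $W\subseteq\F_2^8$ restricts the cubic $f$ to a function $f|_W\in\boole(7)$ whose degree is still at most $3$, and Theorem~\ref{CUBICFACT7} then supplies inside $W$ a $4$-flat on which the restriction is constant or affine; this same $4$-flat is also a flat of $\F_2^8$. The main obstacle I foresee is not the deduction itself, which is a one-line application of monotonicity, but rather ensuring that the $k\le 3$ column of Table~\ref{DR7} is complete and correct. That verification underpins Theorem~\ref{CUBICFACT7} and rests on the classification of $\boole(1,3,7)$ under $\aglp{7}{2}$: per Table~\ref{CLASS} there are only $1\,890$ orbit representatives, so the brute-force work factor $W(4,1,3,7)$ of Eq.~\eqref{eq: working factor} is comfortably small, and no genuinely new computation is required beyond what has already been performed for dimension $7$.
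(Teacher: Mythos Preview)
Your proposal is correct and follows essentially the same route as the paper: both apply Lemma~\ref{MONO} to pass from $m'=8$ down to $m=7$, compute $D_4(3,7)=\max\{D^\dag_4(1,7),D^\dag_4(2,7),D^\dag_4(3,7)\}=1$ from Table~\ref{DR7} via Eq.~\eqref{eq: Drkm leq computing}, and conclude $D^\dag_4(3,8)\le 1$. Your added geometric paraphrase (restrict to a $7$-dimensional hyperplane and invoke Theorem~\ref{CUBICFACT7}) is exactly the content of the proof of Lemma~\ref{MONO}(2).
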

	\begin{proof}
		By Theorem~\ref{CUBICFACT7}, all Boolean cubics in $\boole(7)$ are normal or weakly normal. Using Eq.~\eqref{eq: Drkm leq computing} and applying Lemma~\ref{MONO}, we get that $$D^\dag_{4}(3, 8)\le D_{4}(3, 7)=\max\{ D^\dag_{4}(1, 7),D^\dag_{4}(2, 7),D^\dag_{4}(3, 7)\}=1,$$
		which completes the proof.
	\end{proof}
	
	In the rest of this section, we proceed with filling the remaining values of Table~\ref{DR7}. We notice that the value $D^\dag_4(6, 7) = 2$ (given in the non-dashed box) was recently determined in \cite{HAUGLAND} using a randomized approach. The value $D^\dag_4(4, 7) = 2$ (given in the dashed box), will be computed separately in Section~\ref{sec 4: sieve} due to its importance for the analysis of normality of 8-bit bent functions that we perform in Section~\ref{sec 5: normality 8-bit bent}. Zero values in the rows $r=2,3$ are obtained directly by applying Lemma~\ref{MONO} to the values in Table~\ref{DR6}.
	
	Since  $|\tildeboole(2,4,7)| = 118\,140\,881\,980 \approx 2^{37}$, we can not reasonably use brute force in dimension $7$ to fill the remaining values $D^\dag_r(k, 7)$ in the columns $k=4,5,6,7$ of Table~\ref{DR7}.  To determine these values, we apply Lemma~\ref{MONO} for upper bounds and perform various computations, including classifications and random searches, to establish the lower bounds; for supplementary computations, see the project's web page~\cite{project_normal}. 
From the line $r=4$ in Table~\ref{DR6}, we deduce that for all $k\in \{4,5,6,7\}$ it holds that $1 \le D^\dag_4( k, 7) \leq D_4( k, 6 ) = 2$. The lower bounds in the right part of the table for $D^\dag_r(k, 7)$ were obtained by adding a random cubic function to the elements of $\tildeboole(4,7,7)$.
	
	\begin{example}
		The 5-relative degree of the function $f\in \tildeboole(3,4,7)$ defined by
		\begin{equation*}
			\begin{split}
				f(x) &= x_2x_3x_4x_5+x_1x_2x_3x_6+x_1x_4x_6+x_3x_4x_5x_6+x_2x_3x_7+x_4x_5x_7\\
				&+x_3x_4x_5x_7+x_1x_3x_6x_7+x_3x_4x_6x_7+x_1x_5x_6x_7
			\end{split}
		\end{equation*}
		is equal to 3. This provides the lower bound $3\leq D^\dag_5(4,7)$ and the upper bound  $D^\dag_5(4,7)\leq D_5(4,6) = 4$.
	\end{example}
	
	\begin{example}
		Among the 12 members of $\tildeboole(5,7,7)$, all have 6-relative degree 0, except for
		$$h(x)=x_1 x_2 x_3 x_4 x_5 x_6 + x_2 x_3 x_4 x_5 x_7 + x_1 x_3 x_4 x_6 x_7 + x_1 x_2 x_5 x_6 x_7,$$
		which satisfies $\deg_6(h) = 5$. This establishes the value $D^\dag_6( 6, 7)=5$ in Table~\ref{DR7}.
	\end{example}
	
    At the BFA 2024 conference, we proposed in~\cite{GLP_BFA_2024} the conjecture that all 7-bit quartics are normal or weakly normal. In the following section, we disprove this conjecture by showing the existence of abnormal quartics in $\boole(7)$.
	
	
	\section{Relative degree and abnormality of 7-bit quartics}
	\label{sec 4: sieve}
	In the previous section, we established the bounds $1 \leq D^\dag_4(4, 7) \leq 2$. Here, we prove that the exact value of $D^\dag_4(4, 7)$ is 2, thereby confirming the existence of abnormal 7-bit quartics. Given that  $|\tildeboole(2,4,7)| = 118\,140\,881\,980 \approx 2^{37}$, a brute-force computation of the 2-degree for Boolean functions in $\tildeboole(2,4,7)$ is time-consuming. Instead, an alternative approach involves computing the 2-degree of functions of the form $f + q$, where $f \in \tildeboole(3,4,7)$ and $q \in \boole(2,2,7)$. This requires analyzing $|\tildeboole(3,4,7)| \times 2^{\binom{7}{2}} \approx 2^{16} \times 2^{21}$ combinations, resulting in a comparable computational complexity. However, as we will demonstrate, this complexity can be substantially reduced by analyzing only a subset of the $2^{21}$ quadratic forms. To proceed, we first introduce the following notation.
	
	Let  $\B_V(s, t )$ denote the space of Boolean functions defined on the vector space $V$ with valuation greater or equal to $s$ and degree less or equal to $t$. For $a\in\fdm$, we consider the \emph{restriction map} over  the quadratic functions modulo the space of affine functions on $V$: 
	\def\res{{\rm res}_{a,V}}
	\begin{equation}
		\begin{array}{ccc}
			\boole(2, 2, m ) & \xrightarrow{\res} & \B_V(2,2) \\
			q & \longmapsto & q_{a,V} \mod \B_V(0, 1)
		\end{array}
	\end{equation}

%
	
	\def\Ker{K_{a,V}}
	
	It is clear that the linear map $\res\colon \boole(2, 2, m )\to \B_V(2,2)$ is surjective, since any element of $\B_V(2,2)\subset \boole(2, 2, m )$ is a fixed point of $\res$. By the rank-nullity theorem, its kernel, denoted by $\Ker$, has dimension $\binom{m}{2}-\binom{\dim V}{2}$. The kernel of $\res$ plays a crucial role in speeding up the search for abnormal functions. Let $f\in\tildeboole(3,t,m)$ such that $\deg( f_{a,V}) \leq 2 $, for some flat $a+V$ of $\F_2^m$. If the quadratic part of $f_{a,V}$ is cancelled by some quadric $q_0\in \boole(2,2,m)$, then, it is also cancelled by every element of $q_0 + \Ker$.  Based on this idea, we propose Algorithm~\ref{SIEVING} that determines all quadratic forms $q$ for a given function $f$ such that $f+q$ is abnormal. It applies a sieving procedure to eliminate all the quadrics in the set $Q(f)$ such that the relative degree of $h+q$ on some affine $\lceil m/2 \rceil$-space is less or equal to one.
	
	\def\abnormal(#1){$\textsc{sieving}(#1)$}
	
	\begin{algorithm}[H]
		\caption{$\text{SIEVING}(f)$}
		\begin{algorithmic}[1]
			\Require A member $f$ of $\tildeboole(3,t,m)$.
			\Ensure The set  $Q(f)=\{ q\in \boole(2, 2, m )\colon f+q \mbox{ is abnormal} \}$.
			\State Define $Q(f):=\boole(2, 2, m )$. 
			\For{all $\lceil m/2 \rceil$-vector spaces $V$ }
			\For{all  $a$  in $\F_2^m / V$ } 
			\State $\rho \gets f_{a,V} \mod \B_V(0, 1)$
			\If{$\deg( \rho) \leq 2 $}
			\State choose $q_0\in\boole(2, 2, m )$ s.t. $\res( q_0 ) = \rho $
			\State remove all $q\in q_0 + \Ker$ from $Q(f)$ 
			\EndIf
			\EndFor
			\EndFor
			\State \Return $Q(f)$
		\end{algorithmic}
		\label{SIEVING}
	\end{algorithm}
	
	
		We now apply Algorithm~\ref{SIEVING} to Boolean functions $f \in \tildeboole(3,4,7)$ to identify the sets of quadratic functions $Q(f)$ and, subsequently, the abnormal functions of the form $f+q$, where $q \in Q(f)$. In this case, $\dim \Ker = 21 - 6 = 15$, allowing us to eliminate $2^{15}$ quadratic forms $q \in q_0 + \Ker$ that lead to normal or weakly normal functions $f+q$, provided a single such one $q_0$ is identified. Our C implementation takes approximately 10 minutes on an 80-core machine to compute a cover set of size $39\,952$ for abnormal 7-bit quartics. It was found that $1\,915$ functions $f\in\tildeboole(3,4,7)$ yield abnormal functions $f+q$. Thus, we show:

		\begin{theorem}
			There exist abnormal quartics in $B(7)$. Consequently, $D^\dag_4(4, 7)=2$.
		\end{theorem}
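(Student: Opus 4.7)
The plan is to establish the lower bound $D^\dag_4(4,7)\ge 2$ by exhibiting at least one abnormal quartic in $\boole(7)$, since the upper bound $D^\dag_4(4,7)\le D_4(4,6)=2$ already follows from Lemma~\ref{MONO} together with the line $r=4$ of Table~\ref{DR6}. A naive scan over $\tildeboole(2,4,7)$ is hopeless because its size is $\approx 2^{37}$, so I would instead parametrise each candidate quartic as $f+q$ with $f\in\tildeboole(3,4,7)$ (of manageable size $68\,443\approx 2^{16}$) and $q\in\boole(2,2,7)$ ranging over the $2^{21}$ quadratic forms without affine part. This decomposition is legitimate because adding a quadric preserves $\tildeboole(2,4,7)$, so every abnormal quartic arises this way up to class representation.

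The main step would be to run Algorithm~\ref{SIEVING} on each $f\in\tildeboole(3,4,7)$. For a fixed $f$, I loop over all 4-dimensional affine subspaces $a+V\subset\F_2^7$: for each one, I compute $\rho:=f_{a,V}\bmod \B_V(0,1)$. If $\deg(\rho)\le 2$, then adding any $q\in\boole(2,2,7)$ whose restriction $\res_{a,V}(q)$ equals $\rho$ would cancel the quadratic part of $(f+q)_{a,V}$, producing a function $f+q$ with relative degree at most $1$ on $a+V$, hence not abnormal via this flat. Since $\res_{a,V}$ is surjective with kernel of dimension $\binom{7}{2}-\binom{4}{2}=15$, identifying a single preimage $q_0$ of $\rho$ allows me to sieve out the whole coset $q_0+\Ker$, i.e.\ $2^{15}$ quadrics at once. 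After looping over all flats, the surviving set $Q(f)$ consists precisely of those $q$ for which no $4$-flat makes $f+q$ normal or weakly normal, that is, the abnormal completions of $f$.

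To conclude the theorem, it suffices to find one $f\in\tildeboole(3,4,7)$ with $Q(f)\neq\emptyset$ and one $q\in Q(f)$, and then confirm abnormality of $f+q$ by running Algorithm~\ref{ABNORMAL}, whose work factor is modest in dimension $7$. The key efficiency gain is the kernel sieve, which cuts the cost by a factor of $2^{15}$ per flat and brings the total computation into the range reported by the authors (roughly $10$ minutes on an $80$-core machine).

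The main obstacle is computational rather than conceptual: implementing the restriction map and its kernel carefully so that each flat processes $2^{15}$ quadrics in bulk via coset arithmetic, and ensuring that the outer loop over $2^{16}$ representatives of $\tildeboole(3,4,7)$ combined with enumeration of all $4$-flats of $\F_2^7$ (there are $\binogauss(7,4)\cdot 2^{3}$ of them) fits in the memory and time budget. Once a single surviving pair $(f,q)$ is produced and verified, the existence of abnormal $7$-bit quartics is settled and the equality $D^\dag_4(4,7)=2$ follows.
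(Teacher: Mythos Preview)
Your proposal is correct and follows essentially the same approach as the paper: establish the upper bound via Lemma~\ref{MONO} and Table~\ref{DR6}, then obtain the lower bound by running Algorithm~\ref{SIEVING} over the $68\,443$ representatives of $\tildeboole(3,4,7)$, exploiting the $2^{15}$-dimensional kernel of $\res$ to sieve out non-abnormal quadrics in bulk, and finally confirm at least one survivor with Algorithm~\ref{ABNORMAL}. The paper reports that $1\,915$ of the representatives $f$ yield nonempty $Q(f)$, which is exactly the computational outcome your plan would produce.
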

%

	For $f \in \tildeboole(3,4,7)$, the value $ |Q(f)|$, computed using Algorithm~\ref{SIEVING}, represents the number of quadratic forms $q \in \boole(2,2,7)$ such that $f+q$ is abnormal. We note that Algorithm~\ref{SIEVING} produces 16 distinct values $|Q(f)|$. In Table~\ref{Nh-values}, we list these values along with their multiplicities.

	
	\setlength{\tabcolsep}{3pt}
	\begin{table}[h!]
		\caption{Distribution of the values $|Q(f)|$, for  $f\in\tildeboole(3,4,7)$}
		\centering
		\footnotesize
			\begin{tabular}{|c||c|c|c|c|c|c|c|c|c|c|c|c|c|c|c|c|c|c|}
		\hline
		$|Q(f)|$& 12 & 16 & 24 & 32 & 64 & 32768 & 384 & 48 & 2048 & 6 & 8 & 3 & 4 & 2 & 1 &0 \\
		\hline
		Mult. & 1 & 1 & 1 & 1 & 1 & 1 & 1 &1 & 2 & 3 & 18 & 19 & 48 & 281 & 1535& 66528\\
		\hline
	\end{tabular}
		\label{Nh-values}
	\end{table}
	
	
	
	In most cases, we have that $|Q(f)|=0$, indicating that most of quartics are normal or weakly normal. We also note that $1\,535$ functions $f\in\tildeboole(3,4,7)$ have a unique quadratic function $f\in B(2,2,7)$ such that $f+q$ is abnormal. Below, we give an example of such $f$ and $q$. 
	

	\begin{example}
		Consider the following Boolean function $f\in\tildeboole(3,4,7)$ given by its algebraic normal form:
		\begin{equation*}
			\begin{split}
				f(x)&=x_1 x_2 x_4 + x_2 x_3 x_4 + x_2 x_3 x_5 + x_1 x_4 x_5 + x_3 x_4 x_5 + x_2 x_3 x_4 x_5 + 
				x_1 x_4 x_6\\ 
				&+ x_2 x_3 x_5 x_6 + x_3 x_4 x_5 x_6 + x_1 x_2 x_7 + x_1 x_3 x_6 x_7 + 
				x_4 x_5 x_6 x_7.
			\end{split}
		\end{equation*}
		Applying Algorithm~\ref{SIEVING}, we see that $Q(f)=\{q\}$, where
		$$q(x)=x_2 x_3 + x_1 x_5 + x_2 x_5 + x_3 x_5 + x_3 x_7 + x_5 x_7 + x_6 x_7.$$
		In this way, there exists only one quadratic function $q\in\boole(2,2,7)$ such that $f+q$ is abnormal. The abnormality of $f+q$ can be verified with Algorithm~\ref{ABNORMAL}.
	\end{example}

	In the following section, we use the ideas developed in this section to study the normality of bent functions in eight variables.

	\section{Analysis of the normality of all 8-bit bent functions}\label{sec 5: normality 8-bit bent} 
	
	\def\fdmm{\F^{m-1}_2}
	\def\fdmr{\F^{m-r}_2}
	\def\feight{\F^{8}_2}
	\def\fseven{\F^{7}_2}
	\def\fsix{\F^{6}_2}
	\def\ffive{\F^{5}_2}
	\def\ftwo{\F_2}
	
	\def\walsh#1(#2){ \widehat{#1}(#2)}
	\def\sgn{\textsc{sgn}}
	
	Recall that all 6-bit functions are normal and only a few examples of abnormal bent functions have been found in dimensions greater than or equal to 10~\cite{BENT, LeanderMcGuire2009}. However, the existence of an abnormal 8-bit bent function has been an open question for 30 years. As mentioned in~\cite{KIT}, one could use a classification of 8-bit bent functions to determine whether any are abnormal. In this section, we propose a more efficient method that does not require such a classification. Specifically, we combine the sieving procedure from the previous section with techniques from~\cite{MENG}, which construct bent functions from their restrictions, to establish that all bent functions in $\boole(8)$ are either normal or weakly normal.
	
	\subsection{Decomposition of a bent function into two near-bent functions}
	In this section, we provide the necessary definitions for the procedure (described in the following section) used to prove that all 8-bit bent functions are normal or weakly normal. We begin with the notion of Walsh coefficients, several useful equalities, and a spectral characterization of Boolean bent functions based on them.
	
	\begin{definition}
		The \emph{Walsh coefficient} of $f\in \boole(m)$ at $a \in \fdm$ is defined by
		$$\walsh f (a)=\sum_{x \in \fdm} (-1)^{f(x)+a\cdot x}.$$
		The set of Walsh coefficients of $f\in \boole(m)$ is called the \emph{Walsh spectrum} of $f$. The Walsh coefficients satisfy~\emph{Parseval's identity}:
		\begin{equation}
			\label{PARSEVAL}
			\sum_{a\in \fdm} \walsh f(a)^2 = 2^{2m} 
		\end{equation}
		and \emph{Poisson's formula}:
		\begin{equation}
			\label{POSSON}
			\sum_{v\in V}(-1)^{f(v+b)+c\cdot v}=\frac{1}{|V^\perp|}\sum_{a\in V^\perp} \walsh f(a+c) (-1)^{b\cdot c},
		\end{equation}
		where $b,c \in \fdm$, $V$ is a linear subspace of $\fdm$ and $V^\perp$ is the \emph{orthogonal complement} of $V$ defined by $$V^\perp=\{ a\in\F_2^m\colon a\cdot v=0 \mbox{ for all } v\in V \}.$$
	\end{definition}
	
	The following well-known result provides a spectral characterization of bent functions through their Walsh coefficients.

	\begin{theorem}\cite{ROTHAUS1976,Dillon1974}
		Let $f\in\boole(m)$ be a Boolean function. The following statements are equivalent:
		\begin{enumerate}
			\item The function $f$ is bent on $\F_2^m$.
			\item The Walsh coefficients of $f$ satisfy $$\walsh f (a)\in \{\pm 2^{m/2}\}, \mbox{ for all } a\in\F_2^m.$$
		\end{enumerate}
	\end{theorem}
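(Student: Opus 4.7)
The plan is to express both conditions as properties of the autocorrelation function
$C_f(v) = \sum_{x\in\F_2^m}(-1)^{f(x)+f(x+v)}$
and to link them through the Walsh transform on $\F_2^m$. First I would observe that $f$ is bent if and only if $C_f(v) = 0$ for every nonzero $v$: writing $N_b(v)$ for the number of $x \in \F_2^m$ with $f(x+v)+f(x) = b$, we have $N_0(v)+N_1(v)=2^m$ and $C_f(v) = N_0(v) - N_1(v)$, so $C_f(v)=0$ is exactly the condition $N_0(v)=N_1(v)=2^{m-1}$, which is the bent property from the definition; trivially $C_f(0)=2^m$.

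Next I would establish a Wiener--Khinchin type identity linking the squared Walsh spectrum to $C_f$. Expanding $\walsh f(a)^2$ as a double sum over pairs $(x,y)\in\F_2^m\times\F_2^m$ and applying the substitution $v = x+y$ yields
\[
\walsh f(a)^2 = \sum_{v\in\F_2^m} C_f(v)\,(-1)^{a\cdot v}.
\]
Fourier inversion on $\F_2^m$, which is nothing but the orthogonality relation $\sum_{a\in\F_2^m}(-1)^{a\cdot v}$ equal to $2^m$ for $v=0$ and $0$ otherwise, then gives
\[
C_f(v) = 2^{-m}\sum_{a\in\F_2^m}\walsh f(a)^2\,(-1)^{a\cdot v}.
\]

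From these two displays the equivalence follows immediately. For the direction $(1) \Rightarrow (2)$, if $f$ is bent then $C_f$ is supported at $0$ with value $2^m$, so the first identity collapses to $\walsh f(a)^2 = 2^m$ for every $a$, whence $\walsh f(a) \in \{\pm 2^{m/2}\}$. For the direction $(2) \Rightarrow (1)$, substituting $\walsh f(a)^2 = 2^m$ into the inversion formula gives $C_f(v) = \sum_{a\in\F_2^m}(-1)^{a\cdot v}$, which vanishes for every nonzero $v$ by orthogonality; by the initial observation, $f$ is bent.

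This is essentially the classical argument of Rothaus and Dillon, so no genuine obstacle is expected. The only non-routine bookkeeping lies in the change of variables in the double sum defining $\walsh f(a)^2$ and the justification of Fourier inversion, both of which reduce to the orthogonality of the characters of $\F_2^m$. We also implicitly use that $m$ is even, otherwise $\pm 2^{m/2}$ is not an integer and condition (2) is vacuous; this is consistent with the standing assumption attached to the definition of bent functions in the paper.
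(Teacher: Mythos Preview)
Your argument is correct and is precisely the classical autocorrelation (Wiener--Khinchin) proof going back to Rothaus and Dillon. Note, however, that the paper does not supply its own proof of this theorem: it is stated with a citation to \cite{ROTHAUS1976,Dillon1974} and used as a known spectral characterization, so there is nothing in the paper to compare your approach against.
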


From this statement, it follows immediately that bent functions on $\F_2^m$ exist if and only if $m$ is even. Using the Walsh coefficients of a given bent function, one can always construct another Boolean function, known as its dual. 

\begin{definition}
	Let $f\in\boole(m)$ be bent. Then, the function $\tilde{f}\in\boole(m)$ defined by $(-1)^{\tilde f(a)}= 2^{-m/2}\walsh f(a)$ for all $a\in\F_2^m$ is called the \emph{dual} of $f$. 
\end{definition}

It is well-known that the dual of a bent function is bent~\cite{ROTHAUS1976}. Moreover, by Poisson's formula if $f$ is bent and abnormal  then its dual bent function $\tilde f$ is abnormal too~\cite{CHARPIN}. Now, we present an algebraic decomposition of a given bent function into two near-bent functions with special spectral properties. But first, we give a definition of a near-bent function. 

\begin{definition}
	For even $m$, a Boolean function $g\in\boole(m-1)$ is called \emph{near-bent} if the Walsh spectrum of $g$ is equal to $\{0,\pm 2^{m/2}\}$.
\end{definition}
	The following facts will be used later while constructing a bent function from suitably chosen near-bent functions.
\begin{remark}
	1. Note that for a Boolean function $f\in\boole(m)$, the set $\{ |\walsh  f (a) | \colon a\in\F_2^m \}|$ is invariant under EA-equivalence. Indeed, applying an affine permutation to the input results in a permutation of the Walsh coefficients, while adding an affine function can only alter their signs. As a result, bentness and near-bentness are preserved under EA-equivalence.
	
	\noindent 2. By Parseval's identity, every near-bent function $g\in\boole(m-1)$ has exactly $2^{m-2}$ Walsh coefficients that are equal to zero. Moreover, the algebraic degree of a near-bent function $g$ in $m-1$ variables is at most $m/2$. For a detailed discussion, we refer to~\cite[Section 6.2.2]{Carlet2021}. 
	
	\noindent 3. We observe that bent and near-bent functions have only a few special value distributions. Observe that for any Boolean function  $f\in\boole(k)$ it holds that $\walsh f(0)=2^k-2\operatorname{wt}(f)$, thus $\operatorname{wt}(f)=2^{k-1}-\frac{1}{2}\walsh f(0)$. If $k$ is even and $f$ is bent, then 
	\begin{equation}\label{eq: bent weight}
		\operatorname{wt}(f)=2^{k-1}\pm 2^{k/2-1},
	\end{equation}
	since $\walsh f(a)\in\{\pm 2^{k/2}\}$ holds for all $a\in\F_2^m$. If $k$ is odd and $f$ is near-bent, then $\operatorname{wt}(f)=2^{k-1}$ or 
	\begin{equation}\label{eq: near-bent weight}
		\operatorname{wt}(f)=2^{k-1}\pm 2^{(k+1)/2-1},
	\end{equation}
	since $\walsh f(a)\in\{0,\pm 2^{(k+1)/2}\}$ holds for all $a\in\F_2^m$ in this case. 
\end{remark}
		
	Any Boolean function $f \in \boole(m)$ can be written as a \emph{concatenation} of two Boolean functions $g,h\in\boole(m-1)$, which we denote by $f=(g||h)$ and formally define as follows: 
	\begin{equation}\label{eq: concat}
		\begin{split}
			f(x_1,\ldots,x_{m-1},x_m) :=& (g||h)(x_1,\ldots,x_{m-1},x_m)\\
			= &(x_m+1)g(x_1,\ldots,x_{m-1}) + x_m h(x_1,\ldots,x_{m-1}).
		\end{split}
	\end{equation}

	
	It is clear that the Walsh coefficient of $f=(g||h)\in\boole(m)$ at $(a,\alpha)\in\F_2^m =\F_2^{m-1}\times\F_2$ is given by

	\begin{equation}
		\label{WALSH}
		\walsh f(a,\alpha ) = \walsh g(a) + (-1)^\alpha\,\walsh h(a). 
	\end{equation}
	The following result gives a necessary and sufficient condition for the concatenation $f=(g||f)\in\boole(m)$ of two Boolean functions $(g||f)\in\boole(m-1)$ to be bent. 
	\begin{theorem}\cite{ZhengZhang2000}
		A Boolean function $f=(g||h)\in\boole(m)$ is  bent  if and only if the functions $g,h\in\boole(m-1)$ are near-bent and satisfy 
		\begin{equation}\label{eq: complementary near-bent}
			\forall a\in \F_2^{m-1},\quad \walsh g(a)=0 \Longleftrightarrow  \walsh h(a)\ne 0. 
		\end{equation}
	\end{theorem}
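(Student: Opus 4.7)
The plan is to proceed entirely from the concatenation identity in Eq.~\eqref{WALSH}, which says that for each $a\in\F_2^{m-1}$ and $\alpha\in\F_2$,
$$\walsh f(a,\alpha) = \walsh g(a) + (-1)^\alpha\,\walsh h(a).$$
Both directions of the equivalence will come out of a small case analysis applied to this identity, exploiting the specific value sets allowed for bent and near-bent spectra.

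For the forward direction, I would assume $f=(g\|h)$ is bent and fix an arbitrary $a\in\F_2^{m-1}$. Then $\walsh f(a,0)=\walsh g(a)+\walsh h(a)$ and $\walsh f(a,1)=\walsh g(a)-\walsh h(a)$ both lie in $\{\pm 2^{m/2}\}$. Adding and subtracting these two equalities gives
$$2\walsh g(a)\in\{-2^{m/2+1},0,2^{m/2+1}\}\quad\text{and}\quad 2\walsh h(a)\in\{-2^{m/2+1},0,2^{m/2+1}\},$$
so $\walsh g(a),\walsh h(a)\in\{0,\pm 2^{m/2}\}$. Since $m-1$ is odd and $2^{m/2}=2^{((m-1)+1)/2}$, this is exactly the near-bent condition. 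To finish this direction, I would rule out the two ``bad'' joint configurations: if $\walsh g(a)=\walsh h(a)=0$ then $\walsh f(a,0)=0\ne\pm 2^{m/2}$, a contradiction; if both are nonzero and hence in $\{\pm 2^{m/2}\}$, then $\walsh g(a)+\walsh h(a)\in\{0,\pm 2^{m/2+1}\}$, which again misses $\{\pm 2^{m/2}\}$. Therefore exactly one of the two is nonzero, giving the required equivalence $\walsh g(a)=0\Longleftrightarrow \walsh h(a)\ne 0$.

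For the converse, assume $g,h\in\boole(m-1)$ are near-bent and satisfy the stated condition. For any $a\in\F_2^{m-1}$, exactly one of $\walsh g(a),\walsh h(a)$ equals $\pm 2^{m/2}$ and the other is $0$. Substituting into Eq.~\eqref{WALSH} yields $\walsh f(a,\alpha)\in\{\pm 2^{m/2}\}$ for every $\alpha\in\F_2$, so $f$ is bent by the spectral characterization quoted just above the statement.

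This result is classical and there is no real obstacle; the only point requiring care is the finite case check showing that the bent spectrum of $f$ forces \emph{exactly} one of $\walsh g(a),\walsh h(a)$ to vanish, rather than both or neither. The argument uses nothing beyond Eq.~\eqref{WALSH} and the fact that $\pm 2^{m/2}$ and $\pm 2^{m/2+1}$ are distinct, so once that case analysis is laid out cleanly the proof is essentially a single line in each direction.
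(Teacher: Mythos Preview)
Your argument is correct and is the standard proof of this classical result. Note that the paper does not actually give its own proof of this theorem: it is stated with a citation to \cite{ZhengZhang2000} and used as a black box, so there is no ``paper's proof'' to compare against. Your derivation from Eq.~\eqref{WALSH} is exactly how one would reconstruct it, and the case analysis ruling out the configurations $\walsh g(a)=\walsh h(a)=0$ and $\walsh g(a),\walsh h(a)\in\{\pm 2^{m/2}\}$ is the only nontrivial step, which you handle cleanly.

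One very minor point: the paper's Definition of near-bent requires the Walsh spectrum to be \emph{equal} to $\{0,\pm 2^{m/2}\}$, not merely contained in it. Your forward direction shows containment; to be fully rigorous under that phrasing you could add a sentence observing that, by Parseval's identity~\eqref{PARSEVAL} applied to $g$ (and to $h$), not all Walsh coefficients can be zero, and by the complementarity you establish, not all can be nonzero either---so both $0$ and $\pm 2^{m/2}$ actually occur. This is a definitional nicety rather than a gap in the mathematics.
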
  
	Note that near-bent functions $g$ and $h$ satisfying the condition given in Eq~\eqref{eq: complementary near-bent} are called \emph{complementary}. 
	More precisely, if  $f=(g||h)\in\boole(m)$ is bent, then from the definition of the dual and the form of the Walsh coefficient of $f=(g||h)$ given in Eq.~\eqref{WALSH}, one can deduce that:
	\begin{align}
		\hfill \text{if } \walsh g(a) = 2^m, & \quad \mbox{ then } \walsh h(a) = 0 \text{ and } \tilde{f}(a, \alpha) = 0; \label{eq: (i)} \\ 
		\hfill \text{if } \walsh g(a) = -2^m, & \quad \mbox{ then } \walsh h(a) = 0 \text{ and } \tilde{f}(a, \alpha) = 1;  \label{eq: (ii)} \\
		\hfill \text{if } \walsh g(a) = 0, & \quad \mbox{ then } \walsh h(a) = \pm 2^m \text{ and } \tilde{f}(a, 0) + \tilde{f}(a, 1) = 1. \label{eq: (iii)}
	\end{align}
	
	
	
	In the case if $g,h\in\boole(m-1)$ are near-bent and the resulting concatenation $f=(g||h)\in\boole(m)$ is bent, we call the function $f$ a \emph{bent expansion} of $g$. 

	\begin{remark}\label{rem: abnormal near-bent}
		1. Let $f=(g||h)\in\boole(m)$, $m$ even, be abnormal. Assume that one of $g$ and $h$ in $\boole(m-1)$ is not abnormal (w.l.o.g, we consider $g$), i.e., there exist $\lceil (m-1)/2 \rceil$-flat $a+U$ of $\F_2^{m-1}$ s.t. $g|_{a+U}$ is constant or linear. Define an $m/2$-flat of $\F_2^{m}$ by $a'+U'=(a,0)+(U\times\{0\})$. From Eq.~\eqref{eq: concat}, it holds that $f|_{a'+U'}=g|_U$ is constant or linear, which is a contradiction to the assumption that $f$ is abnormal. Consequently, both functions $g$ and $h$ are abnormal.

		\noindent 2. Using the argument above, to prove that all $m$-bit bent functions are normal or weakly normal, it is sufficient to show that all bent expansions of abnormal $(m-1)$-bit near-bent functions are normal or weakly normal. Moreover, it is enough to test the expansions of representatives of EA-equivalence classes of abnormal near-bent functions $g\in\boole(2,m/2,m-1)$. To see that, assume that a near-bent function $g\in\boole(m-1)$ is bent-expandable, i.e., there exists a near-bent function $h\in\boole(m-1)$ such that $f=(g||h)$ is bent on $\F_2^m$. Let $g'\in\boole(m-1)$ be a near-bent function EA-equivalent to $g\in\boole(m-1)$, i.e., there exists an affine permutation $A$ of $\F_2^{m-1}$ and an affine function $a\in\boole(m-1)$ such that $g'(x)=g(A(x))+a(x)$ holds for all $x\in\F_2^{m-1}$. Define a near-bent function $h\in\boole(m-1)$ by $h'(x)=h(A(x))+a(x)$, for $x\in\F_2^{m-1}$. Then, for all $x\in\F_2^{m-1}$ and $x_m\in\F_2$ it holds that
		$$ f'(x,x_m):=(g'||h')(x,x_m)=(g||h)(A(x),x_m)+a(x).$$
		Consequently, bent-expandability is a property of the whole EA-equivalence class of near-bent functions.
	\end{remark}
	
	In the context of this study, we consider the case $m = 8$ variables.  Even in this setting, for a fixed near-bent function $g \in \boole(7)$ (considered as a representative of an EA-equivalence class), there are still too many near-bent functions $h \in \boole(7)$ to check whether they extend $g$ into a bent function $f = (g \| h)$.  In the following subsection, we propose an efficient algorithm to construct all possible bent expansions $f = (g || h)$ for a given near-bent function $g$.

	\subsection{Bent expansion of a near-bent function}	
	For a given near-bent function $g$ in $\boole(m-1)$, the problem is to determine all $h$ in $\boole(m-1)$ such that $f = (g || h) \in \boole(m)$ is bent.  To achieve this, we use the approach from \cite{MENG}, which consists of determining the dual of the expansion. Denoting by $\Phi\in\boole(m)$ the dual of the bent expansion $f = (g || h)\in\boole(m)$, we express $\Phi$ as the concatenation  
	$\Phi := (\Phi_0 || \Phi_1)$,  where the near-bent Boolean functions $\Phi_0, \Phi_1 \in \boole(m-1)$ must satisfy the following conditions for every $a\in\F_2^{m-1}$ according to Eqs.~\eqref{eq: (i)}, ~\eqref{eq: (ii)} and~\eqref{eq: (iii)}. Accordingly, these functions are defined in the following way:
	\begin{equation}\label{eq: Phi0Phi1}
		\Phi_0(a) =\begin{cases} 0, & \walsh g(a)=+2^{m/2}\\
			1, & \walsh g(a)=-2^{m/2}
		\end{cases}\quad\mbox{and}\quad
		\Phi_1(a)  =\begin{cases} \Phi_0( a ), & \walsh g(a)=\pm2^{m/2}\\
			1+\Phi_0(a), & \walsh g(a)=0
		\end{cases}.
	\end{equation}
	Note that the function $ \Phi_0 + \Phi_1 \in \boole(m-1) $ is the indicator function of the set $ { a \in \mathbb{F}_2^{m-1} \mid \walsh g(a) = 0 } $, and there are exactly $ 2^{m-2} $ Walsh coefficients of $ g \in \boole(m-1) $ that are equal to zero. In this way, to completely determine the functions $ \Phi_0 $ and $ \Phi_1 $ defined by Eq.~\eqref{eq: Phi0Phi1}, one has to correctly assign values to the $ 2^{m-2} $ undefined entries of these functions. For large values of $m$, this becomes a complex task. In the following example, we illustrate the key ideas behind a more efficient method for determining these values, and hence bent-expansions.
	\begin{example}\label{ex: bent-expansion}		
		Let $m=6$. Consider the near-bent function $g\in\boole(m-1)$ defined in Example~\ref{ex: 1} as follows:
		$$g(x)=x_1 x_4 + x_2 x_4 + x_3 x_4 + x_2 x_3 x_4 + x_2 x_5 + x_3 x_5 + x_1 x_3 x_5.$$
		After setting $2^{m-2}=16$ values of the functions $\Phi_0$ and $\Phi_1$ partially defined by Eq.~\eqref{eq: Phi0Phi1}, there are still $2^{m-2}$ undefined values $u_i\in\F_2$ (and their complements $\bar u_i=u_i+1$) that  need to be filled, as illustrated by Figure~\ref{EXAMPLE}. 
		
		\setlength{\tabcolsep}{1.5pt}
		\vspace{-0.5cm}
		\begin{figure}[h]
			\caption{How to expand a near-bent function?}
			\footnotesize
			\centering
			\begin{tabular}
				{|c|c|c|c|c|c|c|c|c|c|c|c|c|c|c|c|c|}
				\hline
				\cellcolor[gray]{.8}$\hat g$& $+8$  & $0$  & $+8$  & $0$  & $0$  & $+8$  & $0$  & $+8$  & $+8$  & $0$  & $0$  & $-8$ & $0$  & $-8$ & $+8$  & $0$   \\ \hline
				\cellcolor[gray]{.8} $\Phi_0$& \PLUS &  $u_1$ & \PLUS  & $u_2$  & $u_3$  & \PLUS  &  $u_4$ & \PLUS  & \PLUS & $u_5$ & $u_6$  & \MOINS &  $u_7$ & \MOINS & \PLUS  & $u_8$  \\ \hline
				\cellcolor[gray]{.8} $\Phi_1$& \PLUS &  $\bar u_1$ & \PLUS  & $\bar u_2$  & $\bar u_3$  & \PLUS  &  $\bar u_4$ & \PLUS  & \PLUS & $\bar u_5$ & $\bar u_6$  & \MOINS &  $\bar u_7$ & \MOINS & \PLUS  & $\bar u_8$ \\ \hline \hline
				\cellcolor[gray]{.8}$\hat g$& $+8$  & $-8$ & $-8$ & $+8$  & $0$  & $0$  & $0$  & $0$  & $+8$  & $+8$  & $0$  & $0$  & $0$  & $0$  & $-8$ & $-8$ \\ \hline
				\cellcolor[gray]{.8} $\Phi_0$& \PLUS  & \MOINS & \MOINS & \PLUS  & $u_9$ &  $u_{10}$ &  $u_{11}$ & $u_{12}$ &  \PLUS & \PLUS  & $u_{13}$ & $u_{14}$ & $u_{15}$  & $u_{16}$  & \MOINS & \MOINS \\ \hline
				\cellcolor[gray]{.8} $\Phi_1$& \PLUS  & \MOINS & \MOINS & \PLUS  & $\bar u_9$ &  $\bar u_{10}$ &  $\bar u_{11}$ & $\bar u_{12}$ &  \PLUS & \PLUS  & $\bar u_{13}$ & $\bar u_{14}$ & $\bar u_{15}$  & $\bar u_{16}$  & \MOINS & \MOINS \\ \hline
			\end{tabular}
			\label{EXAMPLE}
		\end{figure}
    First, we observe that many choices of $u_i$ can be excluded immediately since the near-bent functions $\Phi_0$ and $\Phi_1$ can only have three possible Hamming weights, as given in Eq.~\eqref{eq: near-bent weight}. Based on these weight constraints, many distributions of $u_i$ are already ruled out because they produce incorrect Hamming weights. However, even when the correct weight is obtained, it does not necessarily guarantee a bent expansion, as the conditions on the Walsh spectrum must also be satisfied.
	\end{example}
    Now, we examine in detail the possible Walsh spectra of restrictions of bent functions.
	For a Boolean function $f\in\boole(m)$, 
	we define  $f_\lambda \in\boole( m - r )$ as the   restriction of $f$ to 
	the affine space $\fdmr\times\{\lambda\}$, i.e.,
	$$f_\lambda \colon  \fd^{m-r} \rightarrow \F_2, \quad t\in\fd^{m-r} \mapsto f_\lambda(t) = f(t,\lambda),\quad\mbox{for }\lambda\in\F_2^r,$$
	and call it \emph{$\lambda$-restriction}. In the following statement, we compute the Walsh spectrum of the functions $f_\lambda$ that originate from bent functions $f$; an alternative proof of this statement can be found in~\cite{MENG2008_DM}. 
	\begin{lemma} If $f\in\boole(m)$ is a bent function, 
		the Walsh spectrum of $f_\lambda\in\boole( m -r )$ is included in the set
		\begin{equation}\label{eq: Wr}
			W_r := \{  2^{m/2} -  i\times 2^{m/2 - r +1} \mid  0\leq i \leq  2^{r} \}.
		\end{equation}
	\end{lemma}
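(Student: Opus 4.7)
The plan is to express the Walsh coefficients of the restriction $f_\lambda$ as an average of the Walsh coefficients of $f$, and then use bentness to see that such averages can only take the values in $W_r$. Concretely, identify $\F_2^m$ with $\F_2^{m-r}\times\F_2^r$, take the linear subspace $V=\F_2^{m-r}\times\{0\}$ (so that $V^\perp=\{0\}\times\F_2^r$ and $V+(0,\lambda)$ is exactly the flat on which $f_\lambda$ lives), and apply Poisson's formula (Eq.~\eqref{POSSON}) with $b=(0,\lambda)$ and $c=(a,0)$, for an arbitrary $a\in\F_2^{m-r}$. This yields the key identity
$$\walsh{f_\lambda}(a) \;=\; \sum_{t\in\F_2^{m-r}} (-1)^{f(t,\lambda)+a\cdot t} \;=\; \frac{1}{2^r}\sum_{b\in\F_2^r}(-1)^{\lambda\cdot b}\,\walsh f(a,b).$$
(Alternatively this identity can be derived by direct computation, by substituting the definition of $\walsh f(a,b)$ into the right-hand side and using $\sum_{b}(-1)^{b\cdot(s+\lambda)}=2^r\mathbf{1}_{s=\lambda}$.)

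Next, I would plug in the bent property of $f$. Since $\walsh f(a,b)\in\{\pm 2^{m/2}\}$ for every $(a,b)$, each signed summand $(-1)^{\lambda\cdot b}\walsh f(a,b)$ is itself $\pm 2^{m/2}$. Letting $i=i(a,\lambda)$ denote the number of $b\in\F_2^r$ for which this signed summand equals $-2^{m/2}$, the remaining $2^r-i$ summands contribute $+2^{m/2}$, so that
$$\walsh{f_\lambda}(a)\;=\;\frac{1}{2^r}\bigl((2^r-i)\cdot 2^{m/2}-i\cdot 2^{m/2}\bigr)\;=\;2^{m/2}-i\cdot 2^{m/2-r+1},$$
with $0\le i\le 2^r$. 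This is precisely an element of the set $W_r$ from Eq.~\eqref{eq: Wr}. Since $a\in\F_2^{m-r}$ was arbitrary, the Walsh spectrum of $f_\lambda$ is contained in $W_r$, as claimed.

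There is essentially no real obstacle here; the entire argument is a one-line Fourier inversion on the last $r$ coordinates followed by a counting of signs. The only minor care is in setting up the indexing so that the formula in Eq.~\eqref{eq: Wr} (indexed by $i$) matches the count one naturally arrives at (which is the number of negative terms among $(-1)^{\lambda\cdot b}\walsh f(a,b)$). The bound $0\le i\le 2^r$ on the number of negative signs is sharp as a constraint on what the formula allows, though of course not every value of $i$ in that range need be realized by a given bent $f$ and a given $(a,\lambda)$.
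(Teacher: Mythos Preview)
Your proof is correct and follows essentially the same approach as the paper: both apply Poisson's formula with $V=\F_2^{m-r}\times\{0\}$ to express $\walsh{f_\lambda}(a)$ as $2^{-r}\sum_{w\in\F_2^r}(-1)^{\lambda\cdot w}\walsh f(a,w)$, and then count the number $i$ of negative signed summands to land on $2^{m/2}-i\cdot 2^{m/2-r+1}$. Your write-up is slightly more explicit about the sign-counting step, but the argument is the same.
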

	\begin{proof}
		Observe that the orthogonal complement of $V=\fdmr\times\{0\}$ is the vector space $V^\perp = \{0\} \times  {\F^{r}_2}$ . Applying Poisson's formula, we get the following relation between the Walsh coefficients of the function $f_\lambda$ and the original function $f$:
		\begin{align*}
			\sum_{ v\in \fdmr} (-1)^{ f_\lambda( v ) + t\cdot v } & =\sum_{ v \in  \fdmr } (-1)^{ f( (v,0) + (0,\lambda)   ) + (t,0)\cdot (v,0)  } \\
			&= \frac 1{2^{r}} \sum_{w \in \F^{r}_2} \walsh f( t,w ) (-1)^{w.\lambda}.
		\end{align*}
        Since all the Walsh coefficients of $f$ are equal to $\pm 2^{m/2}$, the right-hand side of the latter equation is equal to $2^{m/2} \times 2^{ -r } \times  ( 2^{r} - 2 \times i  )$, for some integer $i$ satisfying $0 \le i \le 2^{r}$.
	\end{proof}
	
    Using this result, the authors of \cite{MENG} used an elegant approach to constructing the dual $\Phi$ of a bent function $f=(g||h)$ step by step. They achieved this by concatenating appropriate restrictions $\Phi_\lambda$ at different levels. This method effectively eliminates restrictions with incorrect Walsh coefficients, thereby reducing computational complexity. We illustrate this approach in the following figure for the case where the bit-word $\lambda$ has length $r\in \{1,2,3\}$.
	\vspace{-0.5cm}
	\begin{figure}[H]
		\caption{Constructing the dual piece-by-piece: an illustration}
		\renewcommand{\arraystretch}{1.5} 
		\newcolumntype{C}[1]{>{\centering\let\newline\\\arraybackslash\hspace{0pt}}m{#1}}
		\begin{center}
			\begin{tabular}{|C{32pt}|C{32pt}|C{32pt}|C{32pt}|C{32pt}|C{32pt}|C{32pt}|C{32pt}|}
				\multicolumn{4}{c}{$\Phi_{0}$}
				&\multicolumn{4}{c}{$\Phi_{1}$}\\
				\multicolumn{4}{c}{  $\overbrace{\phantom{\hbox to 128pt{}}}^{}$ }
				&\multicolumn{4}{c}{  $\overbrace{\phantom{\hbox to 128pt{}}}^{}$ } \\
				\hline
				$\Phi_{000}$ & $\Phi_{001}$ & $\Phi_{010}$ & $\Phi_{011}$
				& $\Phi_{100}$ & $\Phi_{101}$ & $\Phi_{110}$ & $\Phi_{111}$
				\\
				\hline
				\multicolumn{2}{c}{  $\underbrace{\phantom{\hbox to 64pt{}}}_{}$ }
				&\multicolumn{2}{c}{  $\underbrace{\phantom{\hbox to 64pt{}}}_{}$ }
				&\multicolumn{2}{c}{  $\underbrace{\phantom{\hbox to 64pt{}}}_{}$ }
				&\multicolumn{2}{c}{  $\underbrace{\phantom{\hbox to 64pt{}}}_{}$ }
				\\
				\multicolumn{2}{c}{$\Phi_{00}$}
				&\multicolumn{2}{c}{$\Phi_{01}$}
				&\multicolumn{2}{c}{$\Phi_{10}$}
				&\multicolumn{2}{c}{$\Phi_{11}$} \\
			\end{tabular}
		\end{center}
		\label{fig: piece-by-piece}
	\end{figure}
	
	\def\admis#1;{{A}_{#1}}
	
	\begin{remark}
	In the remaining part of the paper, we consider \( r \in \{1,2,3\} \), 
	as it suffices to reduce the analysis of normality of bent functions in $m=8$ variables to Boolean functions in 7, 6, and 5 variables, respectively. We note that for $m\ge8$ one can consider higher values of $r$ without loss of generality.
	\end{remark}
	
	Given a bit-word $\lambda$ of length $r$, we denote $\lambda'$ the left prefix of $\lambda$ i.e.  $\lambda'b =\lambda$ for some bit $b$. With this notation, we introduce the following definition.

		\begin{definition}
		Let $m$ even, and let $g\in\boole(m - 1 )$ be a near-bent function. For a bit string $\lambda$ of length $r \in \{1,2,3\}$, we define the set of \emph{candidate functions} $C_\lambda(g)$, where each member $\phi \in \boole( m-r )$ 
		satisfies the following conditions:
		\begin{enumerate}
			\item $\forall t\in\fdmr,\quad \walsh g( t, \lambda') =+2^{m/2} \Rightarrow  \phi(t) = 0$; 
			\item $\forall t\in\fdmr,\quad \walsh g( t, \lambda') =-2^{m/2} \Rightarrow \phi(t) = 1$; 
			\item  The Walsh spectrum of $\phi$ is included  in the set $W_{ r  }$, which is defined in Eq.~\eqref{eq: Wr}.
		\end{enumerate}
		Members of the set $C_\lambda(g)$ are referred to as \emph{$\lambda$-candidates for $g$}.
		\end{definition}
	
	\begin{remark}
	Note that any $\lambda$-restriction of a bent expansion of $g$ is in  $C_\lambda(g)$
	but the converse is generally false. 
	\end{remark}
	
	All bent-expansions of $g\in\boole(m-1)$ can be derived from $C_0(g)$, a set obtained in \cite{MENG} through a bottom-up construction. This process involves merging $C_{0j}(g)$ by concatenating elements $\phi_L \in C_{0j0}(g)$ and $\phi_R \in  C_{0j1}(g)$, forming the functions $(\phi_L || \phi_R)$, while retaining only those that qualify as $0j$-candidates.  However, this method is inefficient, as for certain functions, the size of the corresponding $C_{ij}(g)$-sets can become excessively large. 
	
	In the following, we introduce our improvements that optimize this procedure by reducing the search space and improving computational efficiency.
	
	\begin{lemma}
		Let $\zeta\in\boole(m-1)$ be the indicator of Walsh zeroes of $g\in\boole(m-1)$. Let $\lambda$ be a bit string of length $r \in \{1,2,3\}$. If $\phi\in\boole( m - r )$ is a $\lambda$-restriction of a bent expansion of $g$, then the Boolean
		function defined by
		\begin{equation}
			v\mapsto \phi'( v ) = \phi(v) + \zeta(\lambda, v)
		\end{equation}
		is also $\lambda$-candidate for $g$.
	\end{lemma}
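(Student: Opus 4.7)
The plan is to show that $\phi'$ coincides with a sibling restriction of the same dual bent function---specifically, the one obtained by toggling the last bit of $\lambda$---from which membership in $C_\lambda(g)$ is immediate. Throughout, I read ``$\zeta(\lambda,v)$'' as $\zeta(v,\lambda')$: the function $\zeta\in\boole(m-1)$ expects an argument in $\F_2^{m-1}$, and with $v\in\fdmr$ and $\lambda'\in\F_2^{r-1}$ the pair $(v,\lambda')$ has exactly the right length $(m-r)+(r-1)=m-1$, consistent with how the prefix appears in $\walsh g(t,\lambda')$ throughout the section.

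The key identity I would invoke is $\Phi_0+\Phi_1=\zeta$ in $\boole(m-1)$, a direct consequence of Eqs.~\eqref{eq: (i)}--\eqref{eq: (iii)}: the dual $\Phi$ of a bent expansion $f=(g\|h)$ takes equal values on $\{(a,0),(a,1)\}$ precisely when $\walsh g(a)\neq 0$, and opposite values when $\walsh g(a)=0$, which is exactly the indicator $\zeta$. Evaluating this identity at $(v,\lambda')\in\F_2^{m-1}$ and using $\Phi_\alpha(y)=\Phi(y,\alpha)$ yields
$$\Phi(v,\lambda',0)+\Phi(v,\lambda',1)=\zeta(v,\lambda').$$
Writing $\lambda=(\lambda',\lambda_r)$ so that $\phi(v)=\Phi(v,\lambda',\lambda_r)$, adding $\zeta(v,\lambda')$ to $\phi(v)$ collapses the resulting three-term sum and gives $\phi'(v)=\Phi(v,\lambda',\bar\lambda_r)$. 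Thus $\phi'$ is itself a codimension-$r$ restriction of the bent function $\Phi$, namely the sibling of $\phi$ with its last fixed coordinate complemented.

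With this identification, verifying the three conditions of $C_\lambda(g)$ is routine. Conditions~(1) and (2) depend only on the prefix $\lambda'$, which is shared by $\lambda$ and $(\lambda',\bar\lambda_r)$, so they transfer directly from the fact that $\phi'$ is a bona fide $\lambda'$-sliced restriction of the dual of a bent expansion. Condition~(3)---that $\widehat{\phi'}$ takes values in $W_r$---is exactly the spectrum lemma already established for codimension-$r$ restrictions of bent functions. The only real obstacle is the notational parsing of $\zeta(\lambda,v)$; once that is resolved, the three cases $r\in\{1,2,3\}$ are handled uniformly by the single identity $\Phi_0+\Phi_1=\zeta$, with no induction on $r$ or further spectral computation required.
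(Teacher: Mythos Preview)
Your proof is correct and is essentially a fully unpacked version of the paper's one-line proof, which simply reads ``It is a consequence of Eq.~\eqref{eq: (iii)}.'' Your key identity $\Phi_0+\Phi_1=\zeta$ and the identification $\phi'=\Phi_{(\lambda',\bar\lambda_r)}$ are exactly what that reference encodes, and your explicit resolution of the notational issue $\zeta(\lambda,v)\rightsquigarrow\zeta(v,\lambda')$ is a clarification the paper leaves implicit.
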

	\begin{proof}
		It is a consequence of Eq.~\eqref{eq: (iii)}.
	\end{proof}
	For an $r$-bit word $\lambda$, we denote by $A_\lambda(g)$ the set of Boolean functions $\phi \in \boole(m - r)$ such that both $\phi$ and $\phi'$ are $\lambda$-candidates. Its elements are called \emph{$\lambda$-admissible functions} for $g$. The cardinality of the set $A_\lambda(g)$ set may be very large, and we only have a trivial estimate:
	\begin{equation}
		\# A_\lambda( g ) \leq 2^{m_\lambda},
	\end{equation} 
	where $m_\lambda$ denotes the number of Walsh coefficients
	equal to 0 in the affine space $ \fdmr\times \{\lambda\}$.
	
	The following lemma will be used to minimize the size of the dictionary $\mathcal{D}$ in Algorithm \ref{EXPAND}, which serves as a lookup table that organizes and stores admissible functions $L$ based on their corresponding keys $\kappa$ (that we define in a moment). Its primary role is to facilitate efficient retrieval of functions $L$ that share the same key during the later stages of this algorithm.
	\begin{lemma}
		Let $m$ be even, and let $g \in \boole(m-1)$ be near-bent. Without loss of generality, we may assume that
		$$
		\# A_{00}( g ) \leq 2^{m-3}. 
		$$
	\end{lemma}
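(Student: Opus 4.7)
The plan is to use the EA-equivalence of $g$, which by Remark~\ref{rem: abnormal near-bent} preserves bent-expandability, to choose a favourable representative, and then combine a pigeonhole argument on the Walsh zeroes of $g$ with the spectral conditions defining admissibility.

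First, I would observe that $A_{00}(g)$ depends on $g$ only through the Walsh values of $g$ on the hyperplane $H_0:=\F_2^{m-2}\times\{0\}\subset\F_2^{m-1}$, while $A_{10}(g)$ depends only on the Walsh values on $H_1:=\F_2^{m-2}\times\{1\}$, since the $\lambda'$ extracted from $\lambda=00$ or $\lambda=10$ is the only piece of $\lambda$ entering the candidate conditions. Replacing $g$ by the EA-equivalent function $g+x_{m-1}$ translates its Walsh spectrum by the unit vector $(0,\ldots,0,1)$, which swaps $H_0$ and $H_1$ and correspondingly interchanges the roles of $A_{00}(g)$ and $A_{10}(g)$. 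This gives us the freedom to assume, up to EA-equivalence, that $\#A_{00}(g)\le\#A_{10}(g)$; i.e.\ $A_{00}(g)$ corresponds to the ``scarcer'' of the two slices.

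Second, I would apply Parseval's identity to the near-bent $g\in\boole(m-1)$, which yields that $g$ has exactly $2^{m-2}$ Walsh zeroes in $\F_2^{m-1}$. Writing $N(b):=|\{t\in\F_2^{m-2}\colon\walsh g(t,b)=0\}|$, we then have $N(0)+N(1)=2^{m-2}$, so that the EA-choice from the previous step lets us assume $N(0)\le 2^{m-3}$. In other words, the number of ``free'' positions of a $00$-candidate---those not pinned down by conditions (1) and (2) in the definition---can be taken to be at most $2^{m-3}$.

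The main obstacle is the final step, converting this balance condition on free Walsh positions into the claimed cardinality bound $\#A_{00}(g)\le 2^{m-3}$: the trivial estimate $\#A_{00}(g)\le 2^{N(0)}$ is far too weak on its own. To close the gap, I would exploit simultaneously the strong spectral constraint $\walsh\phi(a)\in W_2=\{0,\pm 2^{m/2-1},\pm 2^{m/2}\}$ valid for every $a\in\F_2^{m-2}$, together with the admissibility pairing $\phi\leftrightarrow\phi+\zeta(0,\cdot)$ forcing the same spectral restriction on the XOR-translate. Concretely, my plan is to encode these two spectral conditions as a system of quadratic and linear equations in the free bits of $\phi$, and show by a careful dimension count---reducing to linear constraints via Parseval's identity on the restricted spectrum---that the effective number of degrees of freedom collapses to at most $m-3$, which yields the claimed bound.
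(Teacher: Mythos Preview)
Your first two steps are exactly the paper's proof: split the $2^{m-2}$ Walsh zeroes of $g$ between the two relevant cosets, use the linear translation $g\mapsto g+x_{m-1}$ to swap them, and conclude by pigeonhole that the number of free positions for $A_{00}(g)$ can be taken at most $2^{m-3}$. That is where the paper stops.

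The ``main obstacle'' you then try to overcome is not an obstacle at all: the lemma as printed contains a typographical error. The intended bound is $\#A_{00}(g)\le 2^{2^{m-3}}$, not $\#A_{00}(g)\le 2^{m-3}$. This is clear from two places in the paper: (i) immediately before the lemma the trivial estimate $\#A_\lambda(g)\le 2^{m_\lambda}$ is stated, and the proof only bounds $m_{00}$; (ii) the remark right after the lemma specialises to $m=8$ and states $\#A_{00}(g)\le 2^{32}$, which is $2^{2^{8-3}}$, not $2^{8-3}=32$. So your ``trivial estimate $\#A_{00}(g)\le 2^{N(0)}$'' together with $N(0)\le 2^{m-3}$ is already the full result.

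Your third step is therefore unnecessary, and the plan you sketch for it is not a proof in any case: the vague promise to ``encode the spectral conditions as quadratic and linear equations'' and perform a ``careful dimension count'' collapsing $2^{N(0)}$ possibilities to at most $2^{m-3}$ is neither carried out nor plausible. For $m=8$ this would mean at most $32$ admissible functions, contradicting the very design of Algorithm~\ref{EXPAND}, whose dictionary is sized precisely to accommodate up to $2^{32}$ entries. Drop the third step and your argument is complete and matches the paper's.
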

	\begin{proof}
		Since $m_{00} + m_{01} = 2^{m-2}$, we know that one of these numbers is at most $2^{m-3}$. If it is $m_{00}$, there is nothing to do. Otherwise, we can translate $g$ by $x_{m-1}$, and the result follows.
	\end{proof}
	
	\begin{remark}
	Note that for the case we are interested in, i.e., $m = 8$, we have $\#A_{00}(g) \leq 2^{32}$.  
	\end{remark}

    For an $(m-2)$-bit function $\phi$, where $m$ is even, we define the \emph{key} $\kappa(\phi)$ of $\phi$ as the position $a \in \mathbb{F}_2^{m-2}$ such that $\walsh \phi(a) \equiv 2^{m/2 - 1} \mod 2^{m/2}$. The following lemma enables us to propose an efficient implementation of the ideas in \cite{MENG}.
	\begin{lemma}
		Let $m$ be even. If $\phi\in\boole(m-1)$ is 0-admissible for the near-bent function $g\in\boole(m-1)$, then $\phi$ is the concatenation $(\phi_L|| \phi_R)$, where $\phi_L$ is $00$-admissible, $\phi_R$ is $01$-admissible, and
		$$
		\kappa( \phi_L ) =  \kappa( \phi_R ).
		$$
	\end{lemma}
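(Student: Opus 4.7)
The plan is to write $\phi$ as the canonical concatenation coming from splitting off the last coordinate: set $\phi_L(v)=\phi(v,0)$ and $\phi_R(v)=\phi(v,1)$ for $v\in\F_2^{m-2}$, so that $\phi=(\phi_L\Vert\phi_R)$ in the sense of Eq.~\eqref{eq: concat}. Everything then reduces to three verifications: (i) $\phi_L$ is a $00$-candidate and $\phi_R$ is a $01$-candidate; (ii) both are admissible, not merely candidates; (iii) the keys coincide. The workhorse is the identity for the Walsh coefficients of a concatenation, Eq.~\eqref{WALSH}, which upon inversion reads
\begin{equation*}
\walsh{\phi_L}(a)=\frac{\walsh\phi(a,0)+\walsh\phi(a,1)}{2},\qquad \walsh{\phi_R}(a)=\frac{\walsh\phi(a,0)-\walsh\phi(a,1)}{2}.
\end{equation*}

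First, I would establish (i). The $g$-constraints of a $00$- (resp.~$01$-) candidate evaluate $\walsh g$ at points of the form $(t,0)$ (resp.~$(t,1)$), and these are precisely the specializations of the $0$-candidate constraints on $\phi$ at $a=(t,0)$ and $a=(t,1)$; so the required values of $\phi_L(t)$ and $\phi_R(t)$ are exactly those forced by $\phi$ being a $0$-candidate. For the spectral constraint, since $\walsh\phi(a,0),\walsh\phi(a,1)\in W_1=\{0,\pm 2^{m/2}\}$, a short case analysis using the two inversion formulas above shows $\walsh{\phi_L}(a),\walsh{\phi_R}(a)\in\{0,\pm 2^{m/2-1},\pm 2^{m/2}\}=W_2$, as required.

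Next, I would handle (ii) by running the same argument on the alternative $0$-candidate $\phi'=\phi+\zeta_0$ guaranteed by $0$-admissibility (with $\zeta_0$ the appropriate restriction of the Walsh-zero indicator of $g$). Splitting $\phi'$ along $x_{m-1}$ yields the partners $\phi'_L$ and $\phi'_R$, which by the same reasoning are $00$- and $01$-candidates; and $\phi'_L$ is the prescribed admissibility-flip of $\phi_L$, $\phi'_R$ of $\phi_R$. Hence $\phi_L$ is $00$-admissible and $\phi_R$ is $01$-admissible.

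Finally, for (iii), I would observe from the inversion formulas that $\walsh{\phi_L}(a)\equiv 2^{m/2-1}\pmod{2^{m/2}}$ exactly when $\walsh{\phi_L}(a)\in\{\pm 2^{m/2-1}\}$, which by direct case analysis on $(u,v)=(\walsh\phi(a,0),\walsh\phi(a,1))\in W_1\times W_1$ occurs precisely when exactly one of $u,v$ is zero and the other is $\pm 2^{m/2}$. The symmetric computation for $\walsh{\phi_R}(a)$ gives the same characterization, so both keys equal $\{a\in\F_2^{m-2}:\text{exactly one of }\walsh\phi(a,0),\walsh\phi(a,1)\text{ vanishes}\}$ and in particular $\kappa(\phi_L)=\kappa(\phi_R)$. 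The main obstacle is essentially bookkeeping: matching the level-$2$ notion of admissibility (and the corresponding $\zeta$-flip) with the level-$1$ one, since the real mathematical content is concentrated in the short case analysis on the pair $(u,v)\in W_1^2$ that simultaneously gives the $W_2$-spectrum condition and the key equality.
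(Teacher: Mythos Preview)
Your proof is correct and follows the same approach as the paper: both rest on the Walsh-concatenation identity Eq.~\eqref{WALSH} (equivalently, its inversion) to relate the spectra of $\phi$, $\phi_L$, $\phi_R$, and deduce the key equality from the fact that $\walsh{\phi_L}(a)\pm\walsh{\phi_R}(a)\in W_1=\{0,\pm 2^{m/2}\}$ with each summand in $W_2=\{0,\pm 2^{m/2-1},\pm 2^{m/2}\}$. The paper's proof is much terser---it records only this spectral relation and leaves the verification of the sign constraints (conditions (1) and (2) of candidacy) and the passage to admissibility via $\phi'$ entirely implicit; your write-up makes those routine checks explicit.
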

	\begin{proof}
		We know that $\phi_L$ and $\phi_R$ take values in $W_{m-2} = \{0,\pm 2^{m/2 - 1}, \pm 2^{m/2} \}$,
		and 
		$$\walsh \phi ( a ) = \walsh \phi_L  (a)  \pm \walsh  \phi_R (a) \in W_{m-1} = \{0,\pm 2^{m/2} \}.$$
		This completes the proof.
	\end{proof}

    \subsection{Main result}
	Using the ideas above, we propose the following algorithm, which can be used not only to check whether a given near-bent function is bent-expandable but also to find its bent expansions directly. In the following, we analyze the bent expansions of abnormal near-bent functions $g \in \boole(m-1)$ for $m = 8$, which were constructed in the previous section.
	\begin{algorithm}[H]
		\caption{$\text{EXPANSION}(g)$}
		\begin{algorithmic}[1]
			\Require A near-bent function $g$ in $\boole(m-1)$. 
			\Ensure Print all bent expansions of $g$.
			\State Initialize a dictionary   $\mathcal{D}$.
			\State $A_{00}(g)$:= the set of $00$-admissible functions for $g$
			\For{all $L$ in $A_{00}(g)$}
			\State $\kappa$:= key($L$)
			\State add entry ($\kappa$,$L$) in $\mathcal{D}$
			\EndFor
			\State $A_{010}(g)$:= the set of $010$-admissible functions for $g$
			\State $A_{011}(g)$:= the set of $011$-admissible functions for $g$
			\For{all $(\phi_{010}, \phi_{011})$ in  $A_{010}(g) \times A_{011}(g)$  }
			\State $R$ := $(\phi_{010}||\phi_{011} )$
			\If  {$R$ is 01-admissible for $g$}
			\State $\kappa$ := key($R$)
			\For{all $L$ with key  $\kappa$} 
			\If{$(L||R)$ is 0-admissible for $g$}
			\State $\Phi_0$ := $( L || R )$ 
			\State $\Phi_1$ := $(  L' ||  R' )$ 
			\If{$(  \Phi_0 ||  \Phi_1 )$ is bent}
			\State print its dual!
			\EndIf
			\EndIf
			\EndFor
			\EndIf
			\EndFor
		\end{algorithmic}
		\label{EXPAND}
	\end{algorithm}

    The set of all abnormal functions obtained using the sieving procedure described in Algorithm~\ref{SIEVING} contains $4\,987$ near-bent functions. We identify $1\,309$ representatives up to extended-affine equivalence using Magma with the method described in Section~\ref{sec 2: intro}. Consequently, we apply Algorithm~\ref{EXPAND} to determine which of these near-bent functions are bent-expandable and, in turn, to identify all bent expansions. It appears that only 30 of these representatives are bent-expandable. These are listed in Appendix~\ref{appendix}.
	
    At the end, we collect $\approx 2^{24}$ bent functions (all of which are available on the web page of the project~\cite{project_normal}). Applying Algorithm~\ref{ABNORMAL}, we check that all of them are normal or weakly normal. Thus, we obtain the main result of this paper:
	
    \begin{theorem}
        All the bent functions of $\boole(8)$ are normal or weakly normal.
    \end{theorem}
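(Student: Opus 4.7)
The plan is to reduce the problem of verifying (weak) normality for all $\approx 2^{106}$ bent functions in $\boole(8)$ to a finite computation on a small list of near-bent functions in $\boole(7)$, exploiting the concatenation decomposition $f=(g\|h)$ developed in Section~\ref{sec 5: normality 8-bit bent}. The starting observation is Remark~\ref{rem: abnormal near-bent}: if a bent function $f=(g\|h)\in\boole(8)$ is abnormal, then both near-bent halves $g,h\in\boole(7)$ must themselves be abnormal. Moreover, bent-expandability is a property of EA-equivalence classes, so it is enough to test the expansions of EA-representatives of abnormal near-bent functions. Since every near-bent function in $\boole(7)$ has degree at most $4$, our task reduces to (i) enumerating all abnormal near-bent functions of degree $\le 4$ in seven variables, up to EA-equivalence, and (ii) checking each of their bent expansions for (weak) normality.

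For step (i), I will run Algorithm~\ref{SIEVING} on every cubic representative $f\in\tildeboole(3,4,7)$; this produces the full list of quadratic forms $q\in\boole(2,2,7)$ for which $f+q$ is abnormal, which in particular captures all abnormal $\boole(7)$-quartics. I then restrict this list to the near-bent ones by computing the Walsh spectrum and discarding any $f+q$ whose spectrum is not $\{0,\pm 2^{4}\}$. The remaining functions form a finite collection; grouping by EA-equivalence using the associated linear code $\mathcal{C}_{f+q}$ and Magma's permutation-equivalence test (as described in Section~\ref{sec 2: intro}) yields a list of class representatives of abnormal near-bent functions in $\boole(7)$.

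For step (ii), I apply Algorithm~\ref{EXPAND} to each representative $g$. This algorithm builds the dual $\Phi=(\Phi_0\|\Phi_1)$ of every bent expansion $(g\|h)$ of $g$ piece-by-piece from admissible restrictions on affine quarter-spaces, using the spectral constraint $W_r$ from Eq.~\eqref{eq: Wr} to prune aggressively and the key-matching $\kappa(\phi_L)=\kappa(\phi_R)$ to merge $00$- and $01$-admissible parts efficiently. Only a small fraction of representatives will actually turn out to be bent-expandable; for each such representative, the algorithm outputs the complete list of bent expansions, from which the bent functions themselves are recovered by dualizing. I then invoke Algorithm~\ref{ABNORMAL} on every bent function produced: if all of them turn out to be normal or weakly normal, then by the contrapositive of Remark~\ref{rem: abnormal near-bent}, no 8-bit bent function can be abnormal, which proves the theorem.

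The main obstacle is computational feasibility rather than conceptual: the naive search over $\tildeboole(2,4,7)$ has size $\approx 2^{37}$, and the naive enumeration of bent expansions of a fixed near-bent function has search space far too large to tackle directly. Both bottlenecks are addressed by the same idea, namely restricting the search to cosets of the kernel of a restriction map (for sieving) and to spectrum-admissible partial restrictions organized by key (for expansion), reducing each enumeration by many orders of magnitude. Provided these reductions make the expansion pass tractable for each of the $\le$ few thousand near-bent representatives surviving step (i), the verification by Algorithm~\ref{ABNORMAL} on the resulting bent functions (each taking at most seconds, given that there are only $\binogauss(8,3)=99\,471$ three-dimensional subspaces to scan) finishes the proof.
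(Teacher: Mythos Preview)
Your proposal is correct and follows essentially the same route as the paper: reduce via Remark~\ref{rem: abnormal near-bent} to abnormal near-bent halves in $\boole(7)$ (necessarily of degree $\le 4$), enumerate these via Algorithm~\ref{SIEVING} over $\tildeboole(3,4,7)$, filter to near-bent and pass to EA-representatives, enumerate all bent expansions of each representative with Algorithm~\ref{EXPAND}, and verify each resulting bent function with Algorithm~\ref{ABNORMAL}. The paper reports the concrete counts you left implicit---$4\,987$ abnormal near-bent functions, $1\,309$ EA-representatives, $30$ bent-expandable ones, and $\approx 2^{24}$ bent expansions, all of which pass the normality check---but the logical structure is identical.
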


%
%
%
%
	
	\section{Conclusion}\label{sec 6: conclusion} 
	
	In this work, we investigated the normality of Boolean functions, with a particular focus on bent functions in dimension $m= 8$. Our main contributions include the determination of the values $D^\dag_r(k,6)$ and $D^\dag_r(k,7)$, the classification of abnormal 7-bit quartic functions, and the proof that all 7-bit cubic functions are normal. Furthermore, we established that all 8-bit bent functions are either normal or weakly normal, thus determining their unifying algebraic property in this dimension. We believe that the ideas presented in this article can be applied to address the following questions, thereby advancing the understanding of the normality of Boolean (bent) functions.
	
	\begin{enumerate}
		\item The next open case concerns the normality of cubic bent functions in dimension $m=10$, where a few known examples outside the completed Maiorana–McFarland class have been shown to be normal (see \cite{Polujan2020} for details). 
		\item More generally, for even dimensions $m \geq 10$, it remains an open question whether all cubic bent functions in $B(m)$ are necessarily normal or weakly normal. Finding (if possible) cubic bent functions $f\in\boole(m)$ with $\deg_{ m/2 }(f)\ge 1$ is thus a very interesting and challenging research task. 
		\item Finally, we conjecture that all 8-bit quartic functions are normal or weakly normal.
	\end{enumerate} 
	
	\section*{Acknowledgments}
	Philippe Langevin is partially supported by the French Agence Nationale de la Recherche through the SWAP project under Contract ANR-21-CE39-0012.

	

	\appendix
%
\newpage
\section{Abnormal bent-expandable near-bent functions}\label{appendix}
In this section, we list all 30 EA-inequivalent abnormal bent-expandable near-bent functions in dimension 7. Every near-bent function is given by a string, representing the truth table in hexadecimal format.

    \begin{center}
        \includegraphics[scale=0.95]{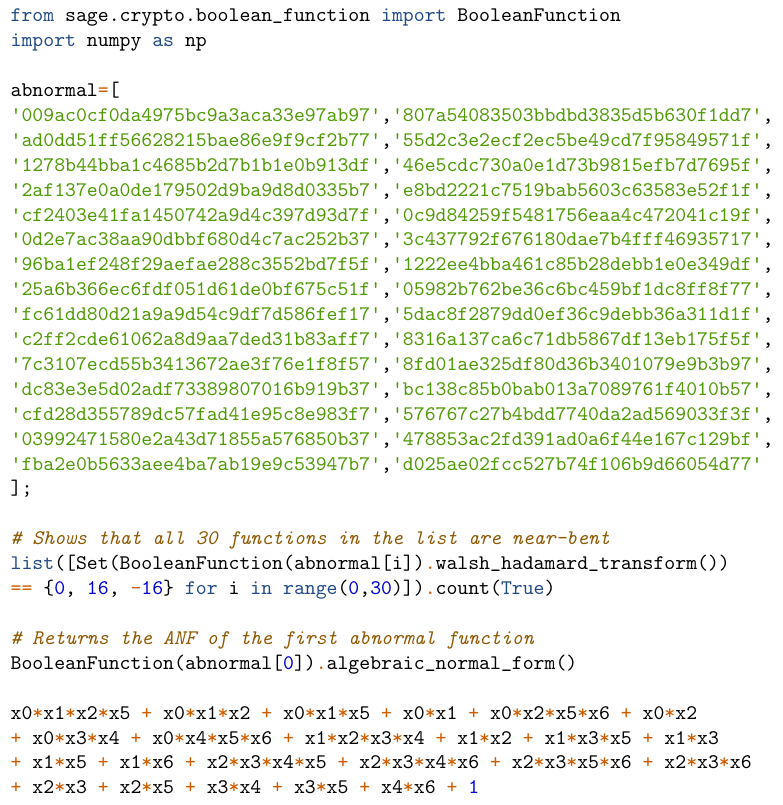}
    \end{center}

%
%
%
%
%

\end{document}